%% LyX 2.1.2 created this file.  For more info, see http://www.lyx.org/.
%% Do not edit unless you really know what you are doing.
\documentclass[10pt,english]{article}
\usepackage[T1]{fontenc}
\usepackage[latin9]{inputenc}
\usepackage{xcolor}
\usepackage{pdfcolmk}
\usepackage{amsthm}
\usepackage{amsmath}
\usepackage{amssymb}
\usepackage{graphicx}
\PassOptionsToPackage{normalem}{ulem}
\usepackage{ulem}

\makeatletter

%%%%%%%%%%%%%%%%%%%%%%%%%%%%%% LyX specific LaTeX commands.
\providecolor{lyxadded}{rgb}{0,0,1}
\providecolor{lyxdeleted}{rgb}{1,0,0}
%% Change tracking with ulem

%%%%%%%%%%%%%%%%%%%%%%%%%%%%%% Textclass specific LaTeX commands.
\newcommand{\lyxaddress}[1]{
\par {\raggedright #1
\vspace{1.4em}
\noindent\par}
}
\theoremstyle{plain}
\newtheorem{thm}{\protect\theoremname}
  \theoremstyle{remark}
  \newtheorem{rem}[thm]{\protect\remarkname}
  \theoremstyle{definition}
  \newtheorem{example}[thm]{\protect\examplename}
  \theoremstyle{definition}
  \newtheorem{defn}[thm]{\protect\definitionname}
  \theoremstyle{plain}
  \newtheorem{lem}[thm]{\protect\lemmaname}
  \theoremstyle{remark}
  \newtheorem*{notation*}{\protect\notationname}
  \theoremstyle{plain}
  \newtheorem{conjecture}[thm]{\protect\conjecturename}

%%%%%%%%%%%%%%%%%%%%%%%%%%%%%% User specified LaTeX commands.
\usepackage[all]{xy}

\pagestyle {myheadings}

\usepackage{multirow}

\usepackage{geometry}

\usepackage{cite}

\date{}

\def\frontmatter@abstractheading{}

\newcommand{\se}{\mathsf{s}_{even}}
\newcommand{\so}{\mathsf{s}_{odd}}

\newcommand{\cntx}{\mathsf{CNTX}}

\makeatother

\usepackage{babel}
  \providecommand{\conjecturename}{Conjecture}
  \providecommand{\definitionname}{Definition}
  \providecommand{\examplename}{Example}
  \providecommand{\lemmaname}{Lemma}
  \providecommand{\notationname}{Notation}
  \providecommand{\remarkname}{Remark}
\providecommand{\theoremname}{Theorem}

\begin{document}

\title{Contextuality in Three Types of Quantum-Mechanical Systems }

\author{Ehtibar N. Dzhafarov\textsuperscript{1}, Janne V. Kujala\textsuperscript{2},
and Jan-Åke Larsson\textsuperscript{3}}

\maketitle

\lyxaddress{\begin{center}
\textsuperscript{1}Purdue University, ehtibar@purdue.edu \\\textsuperscript{2}University
of Jyv\"askyl\"a, jvk@iki.fi\\\textsuperscript{3}Linköping University,
jan-ake.larsson@liu.se
\par\end{center}}
\begin{abstract}
\mbox{}

We present a formal theory of contextuality for a set of random variables
grouped into different subsets (contexts) corresponding to different,
mutually incompatible conditions. Within each context the random variables
are jointly distributed, but across different contexts they are stochastically
unrelated. The theory of contextuality is based on the analysis of
the extent to which some of these random variables can be viewed as
preserving their identity across different contexts when one considers
all possible joint distributions imposed on the entire set of the
random variables. We illustrate the theory on three systems of traditional
interest in quantum physics (and also in non-physical, e.g., behavioral
studies). These are systems of the{\normalsize{} }Klyachko-Can-Binicioglu-Shumovsky-type,
Einstein-Podolsky-Rosen-Bell-type, and Suppes-Zanotti-Leggett-Garg-type.
Listed in this order, each of them is formally a special case of the
previous one. For each of them we derive necessary and sufficient
conditions for contextuality while allowing for experimental errors
and contextual biases or signaling. Based on the same principles that
underly these derivations we also propose a measure for the degree
of contextuality and compute it for the three systems in question. 

\mbox{}

\textsc{Keywords:} CHSH inequalities; contextuality; Klyachko inequalities;
Leggett-Garg inequalities; probabilistic couplings; signaling.

\markboth{Dzhafarov and Kujala}{Contextuality Generalized}
\end{abstract}

\section{Introduction}

A deductive mathematical theory is bound to begin with definitions
and/or axioms, and one is free not to accept them. We propose a certain
definition of contextuality which may or may not be judged ``good.''
Ultimately, its utility will be determined by whether it leads to
fruitful mathematical developments and interesting applications. Our
definition applies to situations where contextuality is traditionally
investigated in quantum physics: Klyachko-Can-Binicioglu-Shumovsky-type
systems of measurements \cite{Klyachko}, Einstein-Podolsky-Rosen-Bell-type
systems \cite{Bell1964,Bell1966,9CHSH,10CH,15Fine}, and Suppes-Zanotti-Leggett-Garg-type
systems \cite{11Leggett,SuppesZanotti1981}. We will refer to these
systems by abbreviations KCBS, EPRB, and SZLG, respectively. In the
absence of what we call ``inconsistency,'' our contextuality criteria
(necessary and sufficient conditions) coincide with the traditional
inequalities. But our criteria also apply to situations with measurement
errors, contextual biases, and interaction among jointly measured
physical properties (``signaling''). Moreover, the logic of constructing
our criteria of contextuality leads to a natural quantification of
the degree of contextuality in the three types of systems considered. 

This paper can be viewed as a companion one for Ref. \cite{KDL2014},
in which we prove a general criterion for contextuality in ``cyclic''
systems of which the systems just mentioned (KCBS, EPRB, and SZLG)
are special cases. However, we use here a different criterion for
contextuality in these three types of systems, whose advantage is
in that it is directly related to the notion of the degree of contextuality.
At the end of this paper we conjecture (see Remark 40) a generalization
of the criterion and the measure of the degree of contextuality to
all ``cyclic'' systems. 

The notion of probabilistic contextuality is usually understood to
be about ``sewing together'' random variables recorded under different
conditions. That is, it is viewed as answering the question: given
certain sets of jointly distributed random variables, can a joint
distribution be found for their union? The key aspect and difficulty
in answering this questions is that different sets of random variables
generally pairwise overlap, share some of their elements. In Ernst
Specker's \cite{1Specker} well-known example with three magic boxes
containing (or not containing) gems, which we present here in probabilistic
terms, we have three binary random variables, $A,B,C$, that can only
be recorded in pairs,
\begin{equation}
X=\left(A,B\right),\; Y=\left(B,C\right),\; Z=\left(A,C\right).\label{eq:specker primitive}
\end{equation}
That is, the joint distribution of $A$ and $B$ in $X$ is known,
and the same is true for the components of $Y$ and $Z$. We ask whether
there is a joint distribution of all three of them, $\left(A,B,C\right)$,
that agrees with the distributions of $X$, $Y$, and $Z$ as its
2-component marginals. In Specker's example the boxes are magically
rigged so that (assuming $A,B,C$ attain values +1/-1, denoting the
presence/absence of a gem in the respective box) 
\begin{equation}
\Pr\left[A=-B\right]=1,\;\Pr\left[-B=C\right]=1,\;\Pr\left[C=-A\right]=1,\label{eq:Specker_incompatible}
\end{equation}
which, obviously, precludes the existence of a jointly distributed
$\left(A,B,C\right)$. We may say then that the system of random variables
(\ref{eq:specker primitive}) exhibits contextuality.

On a deeper level of analysis, however, contextuality is better to
be presented as a problem of determining \emph{identities of the random
variables recorded under different conditions}. That is, it answers
the question: is this random variable (under this condition), say,
$A$ in $X,$ ``the same as'' that one (under another condition),
say, $A$ in $Y$, or is the former at least ``as close'' to the
latter as their distributions in the two pairs allow?

This deeper view is based on the principle we dubbed \emph{Contextuality-by-Default},
developed through a series of recent publications \cite{22DK2013PLOS1,21DK2014FFOP,DzhKujNeurodynamics,6DK2014PLOS1,7DK2014arxiv,Acacio_et_al,DK_LG_Bell_1,DK_Bell_LG2}.
According to this principle, any two random variables recorded under
different (i.e., mutually exclusive) conditions (treatments) are labeled
by these conditions and considered \emph{stochastically unrelated}
(defined on different sample spaces\emph{,} possessing no joint distribution).
Thus, in Specker's example with the magic boxes, we need to denote
the observed three pairs of random variables not as in (\ref{eq:specker primitive}),
but as 
\begin{equation}
X=\left(A_{X},B_{X}\right),\; Y=\left(B_{Y},C_{Y}\right),Z=\left(A_{Z},C_{Z}\right).\label{eq:specker relabeled}
\end{equation}
Of course, any other unique labeling making random variables in one
context distinct from random variables in another context would do
as well. The notion of stochastic unrelatedness within the framework
of the Kolmogorovian probability theory has been explored in the quantum-theoretic
literature, notably by A. Khrennikov \cite{Khr2005,Khr2008,Khr2009}. 

The use of this notion within the present conceptual framework is
based on the fact that stochastically unrelated random variables can
always be \emph{coupled} (imposed a joint distribution upon) \cite{5DK2013LNCS,6DK2014PLOS1,7DK2014arxiv,8NHMT,12DK2013ProcAMS}.
This can generally be done in multiple ways, and no couplings are
privileged a priori. For Specker's example, one constructs a random
6-tuple 
\begin{equation}
S=\left(A_{X},B_{X},B_{Y},C_{Y},A_{Z},C_{Z}\right)\label{eq:Specker_S}
\end{equation}
such that its 2-marginals $X,Y,Z$ in (\ref{eq:specker relabeled})
are consistent with the observed probabilities. In particular, they
should satisfy 
\begin{equation}
\Pr\left[A_{X}=-B_{X}\right]=1,\;\Pr\left[-B_{Y}=C_{Y}\right]=1,\;\Pr\left[C_{Z}=-A_{Z}\right]=1.\label{eq:Specker_compatible}
\end{equation}
Such a coupling $S$ can be constructed in an infinity of ways. To
match this representation with Specker's original meaning, we have
to impose additional constraints on the possible couplings. Namely,
we have to require that $S$ in \eqref{eq:Specker_S} be constructed
subject to the following ``identity hypothesis'': 
\begin{equation}
\Pr\left[A_{X}=A_{Z}\right]=\Pr\left[B_{X}=B_{Y}\right]=\Pr\left[C_{Y}=C_{Z}\right]=1.\label{eq:Specker_constraint}
\end{equation}
Such a coupling $S$, as we have already determined, does not exist,
and we can say that the system of the random variables (\ref{eq:specker relabeled})
exhibits contextuality with respect to the identity hypothesis (\ref{eq:Specker_constraint}). 

One might wonder whether this re-representation of the problem is
useful. Aren't the questions 
\begin{quotation}
``Let me see if I can `sew together' $\left(A,B\right)$, $\left(B,C\right)$,
and $\left(A,C\right)$ into a single $\left(A,B,C\right)$,''
\end{quotation}
and
\begin{quotation}
``Let me see if I can put together $\left(A_{X},B_{X}\right),$ $\left(B_{Y},C_{Y}\right)$,
and $\left(A_{Z},C_{Z}\right)$ into a single $S$ in \eqref{eq:Specker_S}
under the identity hypothesis \eqref{eq:Specker_constraint},'' 
\end{quotation}
aren't they one and the same question in two equivalent forms? Clearly,
they are. But there are two (closely related) advantages of the second
formulation:
\begin{enumerate}
\item It can be readily generalized by replacing the perfect identities
in \eqref{eq:Specker_constraint} with less stringent or altogether
different constraints; and
\item for any given constraint, if a coupling satisfying it does not exist,
this approach allows one to gauge how close one can get to satisfying
it, i.e., one has a principled way for constructing a measure for
the degree of contextuality the system exhibits.
\end{enumerate}
To illustrate these interrelated points on Specker's example, observe
that the identity hypothesis \eqref{eq:Specker_constraint} cannot
be satisfied if the system is ``inconsistently connected,'' i.e.,
if the marginal distribution of, say, $A_{X}$ is not the same as
that of $A_{Z}$. This may happen if the magic boxes somehow physically
communicate (e.g., the gem can be transposed from one of the boxes
being opened to another), and the probability of finding a gem in
the first box ($A=1$) is affected differently by the opening of the
second box (i.e., in context $X$) and of the third box (in context
$Z$). $A_{X}$ and $A_{Z}$ may have different distributions also
as a result of (perhaps magically induced) errors in correctly identifying
which of the two open boxes contains a gem: e.g., when boxes $i$
and $j$ are open ($i<j$), one may with some probability erroneously
see/record the gem contained in the $i$th box as being in the $j$th
box. We may speak of ``signaling'' between the boxes in the former
case, and of ``contextual measurement biases'' in the latter. In
either case, the requirement (\ref{eq:Specker_constraint}) cannot
be satisfied for $A_{X}$ and $A_{Z}$, and to determine this one
does not even have to look at the observed distributions of $\left(A_{X},B_{X}\right),$
$\left(B_{Y},C_{Y}\right)$, and $\left(A_{Z},C_{Z}\right)$. However,
in either of the two cases one can meaningfully ask: what is the maximum
possible value of $\Pr\left[A_{X}=A_{Z}\right]$ that is consistent
with the distributions of $A_{X}$ and $A_{Z}$ (and analogously for
$B_{X},B_{Y}$ and $C_{Y},C_{Z}$), and are these maximum possible
values consistent with the observed distributions of $\left(A_{X},B_{X}\right),$
$\left(B_{Y},C_{Y}\right)$, and $\left(A_{Z},C_{Z}\right)$? 

We will proceed now to formulate these ideas in a more rigorous way.

\section{Systems, Random Bunches, and Connections}

Let $X=\left(A,B,C,\ldots\right)$ be a (generalized) sequence%
\footnote{A sequence is an indexed set, and ``generalized'' means that the
indexing is not necessarily finite or countable. We try to keep the
notation simple, omitting technicalities. A sequence of random variables
that are jointly distributed is a random variable, if the latter term
is understood broadly, as anything with a well-defined probability
distribution, to include random vectors, random sets, random processes,
etc.%
} of jointly distributed random variables, called \emph{components}
of $X$. We will refer to $X$ as a \emph{(random) bunch}. Let $\mathfrak{S}$
be a set of random bunches 
\begin{equation}
X=\left(A_{X},A'_{X},A''_{X},\ldots\right),Y=\left(B_{Y},B'_{Y},B''_{Y},\ldots\right),Z=\left(C_{Z},C'_{Z},C''_{Z},\ldots\right),\ldots
\end{equation}
(of arbitrary cardinalities), with the property that they are \emph{pairwise
componentwise stochastically unrelated}. The term means that no component
of one random bunch is jointly distributed with any component of another. 
\begin{rem}
Intuitively, each random bunch corresponds to certain \emph{conditions}
under which (or \emph{contexts} in which) the components of the bunch
are jointly recorded; and the conditions corresponding to different
random bunches are \emph{mutually exclusive}.
\end{rem}
Any pair $\left\{ A,B\right\} $ such that $A$ and $B$ are components
of two \emph{distinct} random bunches in $\mathfrak{S}$ is called
a (simple) \emph{connection}. A set $\mathcal{\mathfrak{C}}$ of pairwise
disjoint connections is called a \emph{simple set of (simple) connections}. 
\begin{rem}
Intuitively, a connection indicates a pair of random variables $A$
and $B$ that represent ``the same'' physical property, because
of which, ideally, they should be ``one and the same'' random variable
in different contexts. However, the distributions of $A$ and $B$
may be different due to signaling (from other random variables in
their contexts) or due to contextual measurement biases. Note that
the elements $A$ and $B$ of a connection never co-occur, i.e., they
possess no joint distribution, and their ``identity'' therefore
can never be verified by observation.
\end{rem}
Together, $\left(\mathfrak{S},\mathfrak{C}\right)$ form a \emph{system}
(of measurements, or of random bunches). Without loss of generality,
we can assume that $\mathfrak{S}$ contains no ``non-participating''
bunches, i.e., each random bunch $X$ has at least one component $A$
that belongs to some connection $\left\{ A,B\right\} $ in $\mathcal{\mathfrak{C}}$.
In particular, if set $\mathfrak{C}$ is finite, then so is set $\mathfrak{S}$
(even if the number of components in some of the bunches in $\mathfrak{S}$
is not finite).
\begin{example}[KCBS-system]
 A KCBS-system \cite{Klyachko} consists of five pairs of binary
($\pm1$) random variables,
\begin{equation}
\mathfrak{S}=\left\{ \left(V_{1},W_{2}\right),\left(V_{2},W_{3}\right),\left(V_{3},W_{4}\right),\left(V_{4},W_{5}\right),\left(V_{5},W_{1}\right)\right\} .\label{eq:Klyachko S}
\end{equation}
Abstracting away from the physical meaning, the schematic picture
below shows five radius-vectors, each corresponding to a distinct
physical property represented by a binary random variable. They can
only be recorded in pairs (\ref{eq:Klyachko S}), and each of these
pairs corresponds to vertices connected by an edge of the pentagram.
In accordance with the Contextuality-by-Default principle, we label
each variable both by index $i\in\left\{ 1,\ldots,5\right\} $ indicating
the radius-vector (physical property) it corresponds to, and by the
context, defined by which of the two pairs it enters. We use notation
$V$$_{i}$ in one of these pairs and $W_{i}$ in another. For instance,
$i=2$ is used to label $V_{2}$ in the pair $\left(V_{2},W_{3}\right)$
and $W_{2}$ in the pair $\left(V_{1},W_{2}\right)$. With this notation,
the simple set of the connections of interest in this system is
\begin{equation}
\mathfrak{C}=\left\{ \left(V_{1},W_{1}\right),\left(V_{2},W_{2}\right),\left(V_{3},W_{3}\right),\left(V_{4},W_{4}\right),\left(V_{5},W_{5}\right)\right\} .
\end{equation}

\end{example}
\begin{center}
\includegraphics[scale=0.3]{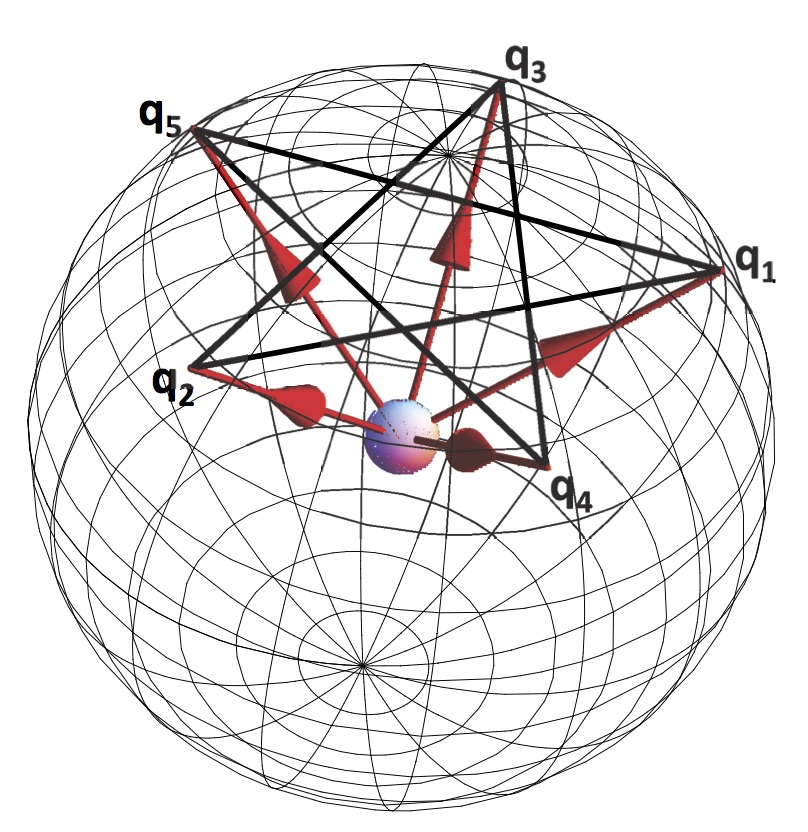}
\par\end{center}

In the ideal KCBS-system, each recorded pair, say, $\left(V_{1},W_{2}\right)$,
can attain values $\left(+1,-1\right)$, $\left(-1,+1\right)$, $\left(-1,-1\right)$,
but not $\left(+1,+1\right)$. In our analysis, however, we allow
for experimental errors, so that the ``pure'' KCBS-system is a special
case of a more general system in which $\Pr\left[V_{1}=+1,W_{2}=+1\right]$
may be non-zero. In the ideal KCBS-system the probabilities are computed
in accordance with the principles of quantum mechanics, so that the
distribution of $\left(V_{i},W_{j}\right)$ in (\ref{eq:Klyachko S})
is determined by the angle between the radius vectors $i$ and $j$,
and the distributions of $V_{i}$ is always the same as the distribution
of $W_{i}$ ($i=1,\ldots,5$). In our analysis, however, we allow
for ``signaling'' between the detectors and/or for ``contextual
measurement biases,'' so that, e.g., $V_{1}$ in $\left(V_{1},W_{2}\right)$
and $W_{1}$ in $\left(V_{5},W_{1}\right)$ may have different distributions.
\begin{rem}
There is no ``traditional'' contextual notation for the KCBS system,
but one can think of a variety of alternatives to our $V-W$ scheme,
e.g., denoting the $i$th measurement in the context of being conjoint
with the $j$th measurement by $R_{i}^{j}$, as we do in Ref. \cite{KDL2014}.\end{rem}
\begin{example}[EPRB-systems]
An EPRB-system \cite{Bell1964,Bell1966,9CHSH,10CH,15Fine} consists
of four pairs of binary ($\pm1$) random variables,
\begin{equation}
\mathfrak{S}=\left\{ \left(V_{1},W_{2}\right),\left(V_{2},W_{3}\right),\left(V_{3},W_{4}\right),\left(V_{4},W_{1}\right)\right\} .\label{eq:Alice-Bob S}
\end{equation}
Again we abstract away from the physical meaning, involving spins
of entangled particles. In the schematic picture below each direction
(1 or 3 in one particle and 2 or 4 in another) corresponds to a binary
random variable. They are recorded in pairs $\left\{ 1,3\right\} \times\left\{ 2,4\right\} $,
so each random variable participates in two contexts, and is denoted
either $V_{i}$ or $W_{i}$ ($i\in\left\{ 1,2,3,4\right\} $) accordingly.
The simple set of connections of interest is
\begin{equation}
\mathfrak{C}=\left\{ \left(V_{1},W_{1}\right),\left(V_{2},W_{2}\right),\left(V_{3},W_{3}\right),\left(V_{4},W_{4}\right)\right\} .
\end{equation}

\end{example}
\begin{center}
\includegraphics[scale=0.35]{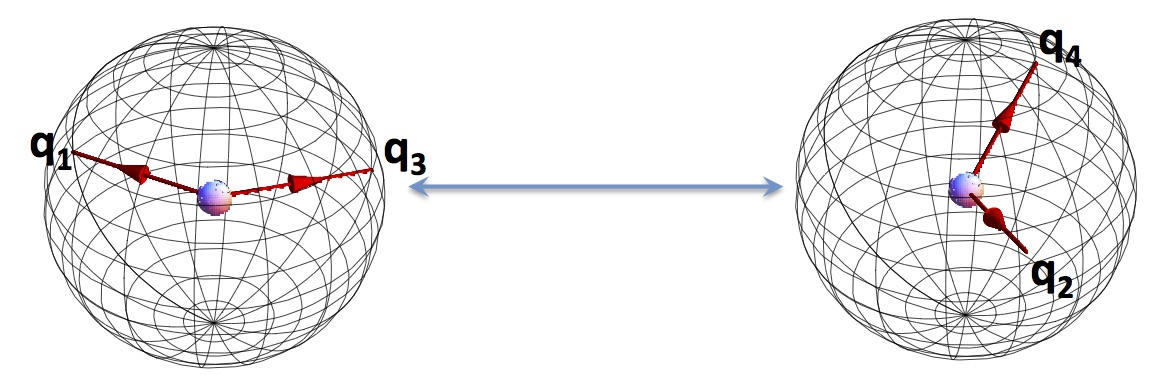}
\par\end{center}

Here, $\left(V_{i},W_{j}\right)$ may attain all four possible values
$\left(\pm1,\pm1\right)$. In the ideal system with space-like separation
between the recordings of $V_{i}$ and $W_{j}$, the distribution
of $V_{i}$ is always the same as that of $W_{i}$ ($i=1,\ldots,4$).
However, we allow for the possibility that the measurements are time-like
separated (so that direct signaling is possible), as well as for the
possibility that the results of the two measurements are recorded
by someone who may occasionally make errors and be contextually biased.
Thus, one may erroneously assign $+1$ to, say, $V_{1}=-1$ more often
than to $W$$_{1}=-1$. 
\begin{rem}
The contextual notation for the EPRB-systems adopted in our previous
papers \cite{22DK2013PLOS1,21DK2014FFOP,DzhKujNeurodynamics,6DK2014PLOS1,7DK2014arxiv,Acacio_et_al,DK_LG_Bell_1,DK_Bell_LG2}
is $\left(A_{ij},B_{ij}\right)$, $i,j\in\left\{ 1,2\right\} $, where
$A$ and $B$ refer to measurements on the first and second particles,
respectively. The first index refers to one of the two $A$-measurements
(1 or 2), the second index refers to one of the two $B$-measurements
(1 or 2). So the non-contextual (misleading) notation for $\left(A_{ij},B_{ij}\right)$
would be $\left(A_{i},B_{j}\right)$. In relation to our present notation,
$A_{11}$ corresponds to $V_{1}$, and $A_{12}$ (the same property
in another context) to $W_{1}$; $A_{21}$ corresponds to $W_{3}$,
and $A_{22}$ (the same property in another context) to $V_{3}$;
and analogously for $B_{ij}$. \end{rem}
\begin{example}[SZLG-system]
An SZLG-system \cite{11Leggett,SuppesZanotti1981} consists of three
pairs of binary ($\pm1$) random variables,
\begin{equation}
\mathfrak{S}=\left\{ \left(V_{1},W_{2}\right),\left(V_{2},W_{3}\right),\left(V_{3},W_{1}\right)\right\} .
\end{equation}
The three random variables are recorded in pairs, the logic of the
notation being otherwise the same as above. The simple set of connections
of interest in this system is
\begin{equation}
\mathfrak{C}=\left\{ \left(V_{1},W_{1}\right),\left(V_{2},W_{2}\right),\left(V_{3},W_{3}\right)\right\} .
\end{equation}
In the Leggett-Garg paradigm proper \cite{11Leggett}, the three measurements
are made at three moments of time, fixed with respect to some zero
point, as shown in the schematic picture below. This is, however,
only one possible physical meaning, and we can think of any three
identifiable measurements performed two at a time. 
\end{example}
\begin{center}
\includegraphics[scale=0.3]{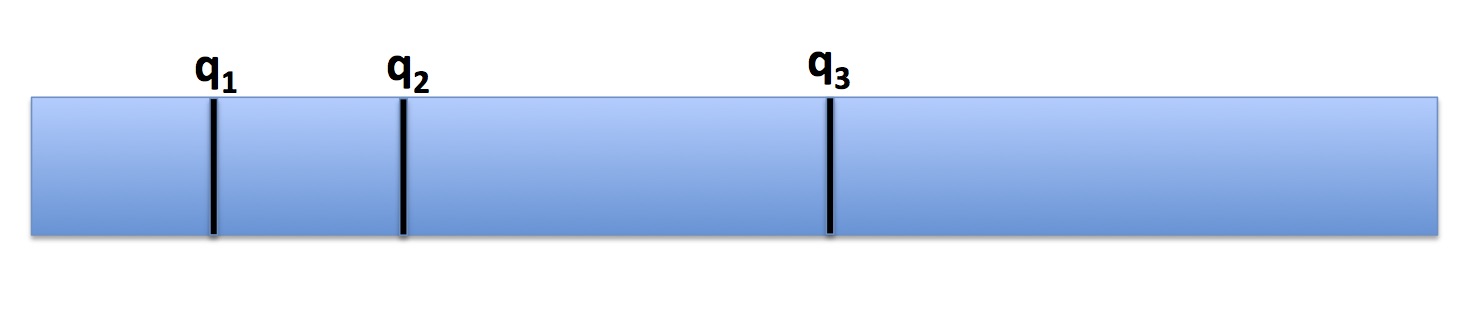}
\par\end{center}

Again, we allow for the possibility of signaling (which is predicted
by the laws of quantum mechanics in some cases, e.g., for pure initial
states, as shown in Ref. \cite{bacciagaluppi}), i.e., earlier measurements
may influence later ones. And, again, we allow for measurement errors
and contextual biases: knowing, e.g., that $W_{2}$ is preceded by
$V_{1}$ and is not followed by another measurement, and that $V_{2}$
is followed by $W_{2}$ and is not preceded by another measurement,
may lead one to record $V_{2}$ and $W_{2}$ differently even if they
are identically distributed ``in reality.''
\begin{rem}
The contextual notation for the LG-systems adopted in Ref. \cite{Acacio_et_al,DK_LG_Bell_1,DK_Bell_LG2}
is $\left(Q_{ij},Q_{ji}\right)$, $i,j\in\left\{ 1,2,3\right\} $
$\left(i<j\right)$, where the first index refers to the earlier of
the two measurements. Thus, $Q_{12}$ corresponds to $V_{1}$ in our
present notation, and $Q_{13}$ (the same property in another context)
to $W_{1}$; $Q_{23}$ corresponds to $V_{2}$, and $A_{21}$ (the
same property in another context) to $W_{2}$; and analogously for
$Q_{31}$ and $Q_{32}$ (resp., $V_{3}$ and $W_{3}$). In Ref. \cite{bacciagaluppi}
the notation used is $Q_{i}^{\left\{ i,j\right\} }$, where the superscript
indicates the context and the subscript the physical property.
\end{rem}

\section{Contextuality}

This section contains our main definitions: of a maximal connection,
of (in)consistent connectedness, and of contextuality.

\subsection{Couplings}
\begin{defn}
A \emph{coupling} of a set of random variables $X,Y,Z,\ldots$ is
a random bunch $\left(X{}^{*},Y{}^{*},Z{}^{*},\ldots\right)$ (with
jointly distributed components), such that 
\begin{equation}
X{}^{*}\sim X,Y{}^{*}\sim Y,Z{}^{*}\sim Z,\ldots,
\end{equation}
where $\sim$ stands for ``has the same distribution as.'' In particular,
a coupling $S$ for $\mathfrak{S}$ is a random bunch coupling all
elements (random bunches) of $\mathfrak{S}$. \end{defn}
\begin{example}
For two binary ($\pm1$) random variables $A,B$, any random bunch
$\left(A{}^{*},B{}^{*}\right)$ with the distribution
\begin{equation}
\begin{array}{l}
r_{11}=\Pr\left[A{}^{*}=+1,B{}^{*}=+1\right]\\
r_{10}=\Pr\left[A{}^{*}=+1,B{}^{*}=-1\right]\\
r_{01}=\Pr\left[A{}^{*}=-1,B{}^{*}=+1\right]\\
r_{00}=\Pr\left[A{}^{*}=-1,B{}^{*}=-1\right],
\end{array}
\end{equation}
such that
\begin{equation}
\begin{array}{c}
r_{11}+r_{10}=\Pr\left[A=+1\right],\\
r_{11}+r_{01}=\Pr\left[B=+1\right],
\end{array}
\end{equation}
is a coupling. \end{example}
\begin{rem}
It is a simple but fundamental theorem of Kolmogorov's probability
theory \cite{15Fine,8NHMT,DK2010,SuppesZanotti1981} that a coupling
$\left(X{}^{*},Y{}^{*},Z{}^{*},\ldots\right)$ of $X,Y,Z,\ldots$
exists if and only if there is a random variable $R$ and a sequence
of measurable functions $\left(f_{X},f_{Y},f_{Z},\ldots\right)$,
such that 
\begin{equation}
X\sim f_{X}\left(R\right),Y\sim f_{Y}\left(R\right),Z\sim f_{Z}\left(R\right),\ldots.
\end{equation}
In quantum mechanics, $R$ is referred to as a \emph{hidden variable}.
(In John Bell's pioneering work \cite{Bell1964}, he considers the
question of whether such a representation exists for four binary random
variables $A_{1},A_{2},B_{1},B_{2}$ with known distributions of $\left(A_{i},B_{j}\right)$,
$i,j\in\left\{ 1,2\right\} $. He imposes no constraints on $R$,
but it is easy to see that the existence of \emph{some} $R$ in his
problem is equivalent to the existence of an $R$ with just 16-values.)
\end{rem}

\subsection{Maximally Coupled Connections and Consistent Connectedness}
\begin{defn}
A coupling $\left(A{}^{*},B{}^{*}\right)$ of a connection $\left\{ A,B\right\} \in\mathfrak{C}$
is called \emph{maximal} if 
\begin{equation}
\Pr\left[A{}^{*}=B{}^{*}\right]\geq\Pr\left[A{}^{*}{}^{*}=B{}^{*}{}^{*}\right]
\end{equation}
for any coupling $\left(A{}^{**},B{}^{**}\right)$ of $\left\{ A,B\right\} $. 
\end{defn}

\begin{defn}
A system $\left(\mathfrak{S},\mathfrak{C}\right)$ is \emph{consistently
connected} (CC) if $A\sim B$ in any connection $\left\{ A,B\right\} \in\mathfrak{C}$.
Otherwise the system $\left(\mathfrak{S},\mathfrak{C}\right)$ is
\emph{inconsistently connected} (not CC). \end{defn}
\begin{rem}
In physics, the CC condition is sometimes referred to as ``no-signaling''
\cite{20Popescu,13Cereceda,14Masanes}, the term we are going to avoid
because then non-CC systems should be referred to as ``signaling.''
The latter term has strong connotations making its use in our technical
meaning objectionable to physicists. Inconsistent connectedness may
be due to signaling in the narrow physical meaning, but it may also
indicate measurement biases due to context (so that one measures $A$
differently when one also measures $B$ than when one also measures
$C$). We make no distinction between ``ideal'' random variables
and those measured ``incorrectly.''\end{rem}
\begin{lem}
In a CC system, a maximal coupling $\left(A{}^{*},B{}^{*}\right)$
for any connection $\left\{ A,B\right\} \in\mathfrak{C}$ exists,
and in this coupling $\Pr\left[A{}^{*}=B{}^{*}\right]=1$. If the
system is not CC, and $A\not\sim B$ in a connection $\left\{ A,B\right\} $,
then in a maximal coupling $\left(A^{*},B^{*}\right)$, if it exists,
$\Pr\left[A{}^{*}=B{}^{*}\right]<1$.
\end{lem}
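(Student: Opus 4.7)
My plan is to handle the two cases separately and rely on the basic fact that couplings preserve marginals, combined with the observation that an almost-sure equality forces equality of distributions.

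For the CC case, the plan is to exhibit an explicit coupling that attains $\Pr[A^{*}=B^{*}]=1$, which is necessarily maximal since probabilities are bounded above by $1$. Concretely, since $A\sim B$, I would pick a random variable $R$ with this common distribution and set $A^{*}=B^{*}=R$. Then $A^{*}\sim A$, $B^{*}\sim B$, and the bunch $(A^{*},B^{*})$ is jointly distributed on a single sample space, so this is a valid coupling of $\{A,B\}$ in the sense of the earlier definition. Clearly $\Pr[A^{*}=B^{*}]=1$, and no other coupling can beat this, so maximality follows.

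For the non-CC case, the plan is a one-line contrapositive. Suppose $(A^{*},B^{*})$ is any coupling of $\{A,B\}$ with $\Pr[A^{*}=B^{*}]=1$. Then $A^{*}$ and $B^{*}$ coincide almost surely, hence have identical distributions. Since $A\sim A^{*}$ and $B\sim B^{*}$ by the definition of coupling, we obtain $A\sim B$. Contrapositively, if $A\not\sim B$, then every coupling (in particular a maximal one, should one exist) must satisfy $\Pr[A^{*}=B^{*}]<1$.

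I do not foresee any real obstacles: the first part is a construction plus the trivial upper bound $1$, and the second part is the contrapositive of the transitivity of $\sim$ through an almost-sure identity. The only point worth being slightly careful about is the phrasing ``if it exists'' in the statement: the lemma does not assert existence of a maximal coupling in the non-CC case (indeed, the supremum of $\Pr[A^{*}=B^{*}]$ over couplings is always attained for variables on standard spaces, but the lemma does not need this), so I do not have to address existence on the non-CC side at all, only the strict inequality conditional on existence.
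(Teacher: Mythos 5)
Your proposal is correct and is exactly the ``obvious'' argument the paper has in mind (the paper offers no written proof, stating only that one is obvious): the diagonal coupling $A^{*}=B^{*}=R$ settles the CC case, and the contrapositive observation that $\Pr\left[A^{*}=B^{*}\right]=1$ forces $A\sim B$ settles the non-CC case. Your handling of the ``if it exists'' clause is also the right reading of the statement.
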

A proof is obvious. We focus now on \emph{binary systems}, in which
all components of the random bunches are binary ($\pm1$) variables.
The three systems mentioned in the opening section, KCBS, LG, and
EPRB-type ones, are binary. (A generalization to components with finite
but arbitrary numbers of values is straightforward.)
\begin{lem}
\label{lem:Two binary}For a connection $\left\{ A,B\right\} $ with
binary ($\pm1$) $A,B$ and
\begin{equation}
p=\Pr\left[A=1\right]\geq\Pr\left[B=1\right]=q,
\end{equation}
a maximal coupling $\left(A{}^{*},B{}^{*}\right)$ exists, and its
distribution is
\begin{equation}
\begin{array}{l}
r_{11}=q\\
r_{10}=p-q\\
r_{01}=0\\
r_{00}=1-p,
\end{array}\label{eq:maximal coupling}
\end{equation}
where
\begin{equation}
r_{ab}=\Pr\left[A{}^{*}=2a-1,B{}^{*}=2b-1\right].
\end{equation}
\end{lem}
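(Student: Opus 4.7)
The plan is to reduce the problem to a one-parameter linear program. Any joint distribution of binary $(A^*, B^*)$ is specified by four non-negative numbers $r_{11}, r_{10}, r_{01}, r_{00}$ summing to $1$, and the coupling constraint fixes the marginals, so I would first write
\begin{equation}
r_{11}+r_{10}=p,\qquad r_{11}+r_{01}=q,\qquad r_{11}+r_{10}+r_{01}+r_{00}=1.
\end{equation}
Solving for $r_{10}, r_{01}, r_{00}$ in terms of the single free parameter $r_{11}$ gives $r_{10}=p-r_{11}$, $r_{01}=q-r_{11}$, and $r_{00}=1-p-q+r_{11}$.

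Next, I would translate non-negativity of all four entries into the range
\begin{equation}
\max(0,p+q-1)\le r_{11}\le \min(p,q)=q,
\end{equation}
using the hypothesis $p\ge q$ for the last equality. The quantity to be maximized is
\begin{equation}
\Pr[A^{*}=B^{*}]=r_{11}+r_{00}=1-p-q+2r_{11},
\end{equation}
which is strictly increasing in $r_{11}$. Hence the maximum is attained at the upper endpoint $r_{11}=q$.

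Plugging $r_{11}=q$ back in yields exactly $r_{10}=p-q$, $r_{01}=0$, $r_{00}=1-p$, matching \eqref{eq:maximal coupling}. A final sanity check confirms all four probabilities are non-negative, since $0\le q\le p\le 1$ gives $p-q\ge 0$ and $1-p\ge 0$, so this is a bona fide probability distribution and therefore a coupling. There is no real obstacle here: the argument is a one-variable linear optimization over a polytope with four vertices in $\mathbb{R}^{4}$, and the conclusion follows from the monotonicity of the objective in $r_{11}$. The only thing worth stressing is the use of $p\ge q$ to identify $\min(p,q)$ with $q$, which is what forces $r_{01}=0$ rather than $r_{10}=0$ in the maximizing coupling.
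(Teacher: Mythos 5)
Your proof is correct and follows essentially the same elementary route as the paper: both arguments come down to observing that $r_{11}\le\min(p,q)=q$ and that this bound is attained, forcing the stated distribution. The paper phrases it as simultaneously maximizing $r_{11}$ and $r_{00}$, while you reduce to a monotone one-parameter optimization in $r_{11}$; the content is the same, and your explicit feasibility check is a harmless bonus.
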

\begin{proof}
For given values $p\geq q$, the maximum possible value of $r_{11}$
is $\min\left(p,q\right)=q$, and the maximum possible value of $r_{00}$
is $\min\left(1-p,1-q\right)=1-p$; these values are attained in distribution
\eqref{eq:maximal coupling}, with $r_{01},r_{10}$ determined uniquely.
$\Pr\left[A{}^{*}=B{}^{*}\right]=r_{11}+r_{00}$ in this distribution
has the maximum possible value, $1-\left(p-q\right)$.\end{proof}
\begin{rem}
In a maximal coupling $\left(A{}^{*},B{}^{*}\right)$ of two binary
random variables $A,B$ the expectation
\begin{equation}
\left\langle A{}^{*}B{}^{*}\right\rangle =2\left(r_{11}+r_{00}\right)-1
\end{equation}
attains its maximum possible value (assuming $p\geq q$) 
\begin{equation}
\left\langle A{}^{*}B{}^{*}\right\rangle =1-2\left(p-q\right)=1-\left(\left\langle A\right\rangle -\left\langle B\right\rangle \right).
\end{equation}
. 
\end{rem}

\begin{rem}
\label{rem:In-many-cases}In some cases it is more convenient to speak
of the minimum value of $\Pr\left[A{}^{*}\not=B{}^{*}\right]=r_{10}+r_{01}$
rather than the maximum value of $\Pr\left[A{}^{*}=B{}^{*}\right]=r_{11}+r_{00}$.
This minimum can be presented as
\begin{equation}
\Pr\left[A{}^{*}\not=B{}^{*}\right]=p-q=\frac{1}{2}\left(\left\langle A\right\rangle -\left\langle B\right\rangle \right).
\end{equation}

\end{rem}

\begin{rem}
In a maximal coupling $\left(A{}^{*},B{}^{*}\right)$ of two binary
random variables $A,B$, 
\begin{equation}
\Pr\left[A{}^{*}=B{}^{*}\right]=1
\end{equation}
if and only if $A\sim B$, i.e.,
\begin{equation}
p=\Pr\left[A=1\right]=\Pr\left[B=1\right]=q.
\end{equation}

\end{rem}

\subsection{Contextuality}
\begin{defn}
\label{def:Contextuality}Let maximal couplings exist for all connections
in $\mathfrak{C}$. A system $\left(\mathfrak{S},\mathfrak{C}\right)$
has a \emph{(maximally) noncontextual description} if there exists
a coupling $S$ for $\mathfrak{S}$ in which all 2-marginals $\left(A{}^{*},B{}^{*}\right)$
that couple the connections in $\mathfrak{C}$ are maximal couplings.
If such a coupling $S$ does not exist, the system is \emph{contextual}.\end{defn}
\begin{rem}
We will omit the qualifier ``maximally'' when speaking of the existence
of a maximally noncontextual description.
\end{rem}

\begin{rem}
In particular, if the system is CC, it has a noncontextual description
if and only if there is a coupling $S$ for $\mathfrak{S}$ in which
$\Pr\left[A{}^{*}=B{}^{*}\right]=1$ for all connections $\left\{ A,B\right\} \in\mathfrak{C}$.
This is essentially the traditional use of the term ``(non)contextuality.''\end{rem}
\begin{example}
Let system $\left(\mathfrak{S},\mathfrak{C}\right)$ consist of random
bunches
\[
\left(A,C\right),\left(B,D\right),
\]
with all components binary ($\pm1$), and a single connection $\left\{ A,B\right\} $.
To determine if the system is contextual, we consider all possible
couplings for $A,B,C,D,$ i.e., all possible random bunches
\[
\left(A^{*},B^{*},C^{*},D^{*}\right)
\]
such that $\left(A^{*},C^{*}\right)\sim\left(A,C\right)$ and $\left(B^{*},D^{*}\right)\sim\left(B,D\right)$.
Denoting, for $a,b,c,d\in\left\{ 0,1\right\} $,
\begin{equation}
\begin{array}{c}
s_{ac}=\Pr\left[A=2a-1,C=2c-1\right],\\
t_{bd}=\Pr\left[B=2b-1,D=2d-1\right],\\
u_{abcd}=\Pr\left[A^{*}=2a-1,B^{*}=2b-1,C^{*}=2c-1,D^{*}=2d-1\right],
\end{array}
\end{equation}
the distributional equations $\left(A^{*},C^{*}\right)\sim\left(A,C\right)$
and $\left(B^{*},D^{*}\right)\sim\left(B,D\right)$ translate into
the following 8 equations for 16 probabilities $u_{abcd}$:
\begin{equation}
\sum_{b=0}^{1}\sum_{d=0}^{1}u_{abcd}=s_{ac},\quad a,c\in\left\{ 0,1\right\} ,\label{eq:basicexample1}
\end{equation}
\begin{equation}
\sum_{a=0}^{1}\sum_{c=0}^{1}u_{abcd}=t_{bd},\quad b,d\in\left\{ 0,1\right\} .\label{eq:basicexample2}
\end{equation}
The requirement that the coupling for $\left\{ A,B\right\} $ be maximal
translates into the additional four equations 
\begin{equation}
\sum_{c=0}^{1}\sum_{d=0}^{1}u_{abcd}=r_{ab}=\Pr\left[A^{*}=2a-1,B^{*}=2b-1\right],\label{eq:basicexample3}
\end{equation}
where, in view of Lemma \eqref{lem:Two binary}, $r{}_{ab}$ is given
by \eqref{eq:maximal coupling}, with the same meaning of $p,q$ and
the same convention $p=\Pr\left[A=1\right]\geq\Pr\left[B=1\right]=q.$
The problem of contextuality therefore reduces to one of determining
whether the system of 12 equations \eqref{eq:basicexample1}-\eqref{eq:basicexample2}-\eqref{eq:basicexample3}
for the 16 unknown $u_{abcd}\geq0$ has a solution. The answer in
this case can be shown to be affirmative, so the system considered
has a noncontextual description.
\end{example}
Definition \ref{def:Contextuality} is sufficient for all subsequent
considerations in this paper, but we note that it can be extended
to situations when maximal couplings for connections do not necessarily
exist. Let us associate with each connection for $A,B$ a supremal
number 
\[
p_{AB}=\sup_{\textnormal{all couplings }\left(A^{*},B^{*}\right)}\Pr\left[A{}^{*}=B{}^{*}\right].
\]

\begin{defn}[extended]
 A system $\left(\mathfrak{S},\mathfrak{C}\right)$ has a \emph{(maximally)
noncontextual} description (\emph{is contextual}) if there exists
(resp., does not exist) a sequence of couplings $S_{1},S_{2},\ldots$
for $\mathfrak{S}$ in which $\Pr\left[A{}^{*}=B{}^{*}\right]$ for
all connections $\left\{ A,B\right\} $ in $\mathfrak{C}$ uniformly
converge to the corresponding supremal numbers $p_{AB}$.
\end{defn}

\subsection{Measure of Contextuality for Binary Systems with Finite Simple Sets
of Connections}

We will assume that in the binary systems we are dealing with the
simple set of connections $\mathfrak{C}$ is finite:
\begin{equation}
\mathfrak{C}=\left\{ \left\{ A_{i},B_{i}\right\} :i\in\left\{ 1,\ldots,n\right\} \right\} .
\end{equation}
 
\begin{lem}
Given a finite simple set of connections $\left\{ \left\{ A_{i},B_{i}\right\} :i\in\left\{ 1,\ldots,n\right\} \right\} $
in a binary system, the respective couplings in the set $\left\{ \left(A_{i}^{*},B_{i}^{*}\right):i\in\left\{ 1,\ldots,n\right\} \right\} $
are all maximal if and only if 
\begin{equation}
\sum_{i=1}^{n}\Pr\left[A{}_{i}^{*}\not=B_{i}{}^{*}\right]=\frac{1}{2}\sum_{i=1}^{n}\left|\left\langle A_{i}^{*}\right\rangle -\left\langle B_{i}^{*}\right\rangle \right|.
\end{equation}
\end{lem}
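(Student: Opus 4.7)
The plan is to reduce the claim to the per-connection statement already extracted in the preceding remarks (in particular Remark \ref{rem:In-many-cases}), and then observe that a sum of non-negative terms bounded below equals its lower bound exactly when each individual term meets its bound.

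First I would record the single-connection inequality. For any coupling $(A^{*},B^{*})$ of a binary $\pm 1$ connection $\{A,B\}$, write $p=\Pr[A=1]$ and $q=\Pr[B=1]$ (so $\langle A\rangle=2p-1$, $\langle B\rangle=2q-1$). Since $\Pr[A^{*}\ne B^{*}]=r_{10}+r_{01}$ while $r_{10}-r_{01}=p-q$ regardless of the choice of coupling, one has $r_{10}+r_{01}\ge |r_{10}-r_{01}|=|p-q|=\tfrac{1}{2}|\langle A\rangle-\langle B\rangle|$. Combined with Lemma~\ref{lem:Two binary} and Remark~\ref{rem:In-many-cases}, which identify the bound $|p-q|$ as the minimum attained precisely by the maximal coupling, this gives
\begin{equation}
\Pr[A_{i}^{*}\ne B_{i}^{*}]\;\ge\;\tfrac{1}{2}\bigl|\langle A_{i}^{*}\rangle-\langle B_{i}^{*}\rangle\bigr|,
\end{equation}
with equality if and only if $(A_{i}^{*},B_{i}^{*})$ is a maximal coupling of $\{A_{i},B_{i}\}$. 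Note that $\langle A_{i}^{*}\rangle=\langle A_{i}\rangle$ and $\langle B_{i}^{*}\rangle=\langle B_{i}\rangle$ because a coupling preserves the marginal distributions, so the right-hand side does not depend on the choice of coupling.

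Second I would sum over $i=1,\dots,n$. Because each summand on the left exceeds the corresponding summand on the right, the sum inequality
\begin{equation}
\sum_{i=1}^{n}\Pr[A_{i}^{*}\ne B_{i}^{*}]\;\ge\;\frac{1}{2}\sum_{i=1}^{n}\bigl|\langle A_{i}^{*}\rangle-\langle B_{i}^{*}\rangle\bigr|
\end{equation}
holds unconditionally, and equality in the sum is equivalent to equality in every one of the $n$ pointwise inequalities, simply because a finite sum of non-negative differences vanishes iff each difference does. By the first step, this in turn is equivalent to each $(A_{i}^{*},B_{i}^{*})$ being a maximal coupling, establishing the stated equivalence.

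There is no real obstacle: the only care needed is to verify the pointwise bound from the structure of a $2\times 2$ coupling table (the marginal constraint $r_{10}-r_{01}=p-q$ is what forces the inequality) and to invoke the earlier lemma/remark for the equality condition. The finiteness of $\mathfrak{C}$ is used only to guarantee that the sums are well-defined and that equality of sums forces termwise equality.
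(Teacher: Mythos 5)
Your proposal is correct and follows the same route as the paper, which simply states that the lemma ``immediately follows'' from Lemma \ref{lem:Two binary} and Remark \ref{rem:In-many-cases}: you make explicit the pointwise bound $\Pr[A_{i}^{*}\ne B_{i}^{*}]\ge\tfrac{1}{2}\left|\langle A_{i}\rangle-\langle B_{i}\rangle\right|$ (via $r_{10}+r_{01}\ge\left|r_{10}-r_{01}\right|=\left|p-q\right|$), its equality case, and the termwise-equality argument for the sum. Nothing is missing; your write-up is just a fuller version of the paper's one-line justification.
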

\begin{proof}
Immediately follows from Lemma \eqref{lem:Two binary} and Remark
\eqref{rem:In-many-cases}.\end{proof}
\begin{notation*}
We denote 
\begin{equation}
\Delta_{0}\left(\mathfrak{C}\right)=\frac{1}{2}\sum_{i=1}^{n}\left|\left\langle A_{i}^{*}\right\rangle -\left\langle B_{i}^{*}\right\rangle \right|,
\end{equation}
and this quantity is to play a central role in the subsequent computations.\end{notation*}
\begin{defn}
Let $\Delta_{\min}\left(\mathfrak{S},\mathfrak{C}\right)$ for a system
with $\mathfrak{C}=\left\{ \left\{ A_{i},B_{i}\right\} :i\in\left\{ 1,\ldots,n\right\} \right\} $
be the infimum for 
\[
\sum_{i=1}^{n}\Pr\left[A{}_{i}^{*}\not=B_{i}{}^{*}\right]
\]
across all possible couplings $S$ for $\mathfrak{S}$. 
\end{defn}
This is another quantity to play a central role in subsequent computations.
\begin{thm}
For a binary system with a finite simple set of connections, the value
$\Delta_{\min}\left(\mathfrak{S},\mathfrak{C}\right)$ is achieved
in some coupling $S$, and
\begin{equation}
\Delta_{\min}\left(\mathfrak{S},\mathfrak{C}\right)\geq\Delta_{0}\left(\mathfrak{C}\right).\label{eq:delta-delta1}
\end{equation}
The system has a noncontextual description if and only if
\begin{equation}
\Delta_{\min}\left(\mathfrak{S},\mathfrak{C}\right)=\Delta_{0}\left(\mathfrak{C}\right).\label{eq:delta-delta2}
\end{equation}
\end{thm}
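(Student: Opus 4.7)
The theorem has three parts: attainment of the infimum, the inequality $\Delta_{\min}(\mathfrak{S},\mathfrak{C}) \geq \Delta_{0}(\mathfrak{C})$, and the equivalence with noncontextuality. I would prove them in that order because the equivalence step will make essential use of the inequality being \emph{termwise} tight.

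First, I would argue attainment by compactness. Since $\mathfrak{C}$ is finite, at most $2n$ components of the bunches in $\mathfrak{S}$ participate in connections, and these are all $\pm 1$-valued, so their joint distribution under any coupling $S$ lives in the finite-dimensional simplex on the $2^{2n}$ atoms of $\{\pm 1\}^{2n}$. Any coupling $S$ of $\mathfrak{S}$ induces such a joint distribution whose within-bunch marginals match the observed ones; conversely, any such distribution extends to a full coupling of $\mathfrak{S}$ by drawing the non-participating components from each bunch's conditional distribution given its participating components. Thus the set of achievable distributions on the participating components is a closed (in fact, polyhedral) subset of the simplex, hence compact, and the functional $\sum_{i=1}^n \Pr[A_i^{*}\neq B_i^{*}]$ is linear in the atomic probabilities, hence continuous. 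So the infimum is a minimum.

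Second, for the inequality, I use that marginals are preserved: $\langle A_i^{*}\rangle = \langle A_i\rangle$ and $\langle B_i^{*}\rangle = \langle B_i\rangle$ for any coupling, so $\Delta_0(\mathfrak{C})$ is in fact a constant determined by the observed single-variable expectations. By Lemma \ref{lem:Two binary} and Remark \ref{rem:In-many-cases}, for any coupling of a single connection $\{A_i,B_i\}$,
\begin{equation}
\Pr[A_i^{*}\neq B_i^{*}] \;\geq\; \tfrac{1}{2}\bigl|\langle A_i\rangle - \langle B_i\rangle\bigr|,
\end{equation}
with equality precisely when $(A_i^{*},B_i^{*})$ is a maximal coupling. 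Summing over $i$ yields $\sum_i \Pr[A_i^{*}\neq B_i^{*}] \geq \Delta_0(\mathfrak{C})$ for every coupling $S$, and taking the infimum gives \eqref{eq:delta-delta1}.

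Third, for \eqref{eq:delta-delta2}, the ``only if'' direction is immediate: a noncontextual description is a coupling $S$ in which every $(A_i^{*},B_i^{*})$ is maximal, so each termwise inequality above is an equality, and summing yields $\sum_i \Pr[A_i^{*}\neq B_i^{*}] = \Delta_0(\mathfrak{C})$, forcing $\Delta_{\min}(\mathfrak{S},\mathfrak{C}) \leq \Delta_0(\mathfrak{C})$ and hence equality via part two. For the converse, I use attainment: let $S$ be a minimizing coupling, so $\sum_i \Pr[A_i^{*}\neq B_i^{*}] = \Delta_0(\mathfrak{C})$. Since this is a sum of nonnegative quantities $\Pr[A_i^{*}\neq B_i^{*}] - \tfrac{1}{2}|\langle A_i\rangle - \langle B_i\rangle|$ that totals zero, every term must vanish, so every induced $(A_i^{*},B_i^{*})$ is a maximal coupling, i.e., $S$ is a noncontextual description.

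The only delicate point I anticipate is the attainment step in the presence of possibly infinite bunches; the cleanest way around this is the marginal-projection reduction sketched above, which converts the problem into a finite-dimensional compactness argument and makes the rest of the proof routine bookkeeping.
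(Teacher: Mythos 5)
Your proof is correct and follows essentially the same route as the paper's: the paper also establishes attainment by observing that couplings form a polytope cut out by linear constraints on the atomic probabilities (with a linear objective), and it dismisses the inequality and the equivalence as ``obvious,'' which your termwise-tightness argument via Lemma \ref{lem:Two binary} and Remark \ref{rem:In-many-cases} simply makes explicit. Your marginal-projection reduction to the participating components is a welcome extra care for possibly infinite bunches, but it does not change the substance of the argument.
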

\begin{proof}
That $\Delta_{\min}\left(\mathfrak{S},\mathfrak{C}\right)$ is an
achievable minimum follows from the fact that any coupling $S$ is
described by a system of linear inequalities relating to each other
\[
\Pr\left[A^{*}=a,B^{*}=b,C^{*}=c,\ldots\right]
\]
for all possible values $\left(a,b,c,\ldots\right)$ of all the random
variables involved (the union of all random bunches in $\mathfrak{S}$).
$\Delta_{\min}\left(\mathfrak{S},\mathfrak{C}\right)$ being a linear
combinations of these probabilities, its infimum has to be a minimum.
\eqref{eq:delta-delta1} and \eqref{eq:delta-delta2} are obvious.
\end{proof}
This theorem allows one to construct a convenient definition for the
degree of contextuality in binary systems with finite number of connections.
\begin{defn}
In a binary system $\left(\mathfrak{S},\mathfrak{C}\right)$ with
a finite simple set of connections the \emph{degree of contextuality}
is 
\begin{equation}
\cntx\left(\mathfrak{S},\mathfrak{C}\right)=\Delta_{\min}\left(\mathfrak{S},\mathfrak{C}\right)-\Delta_{0}\left(\mathfrak{C}\right)\geq0.
\end{equation}

\end{defn}
In the subsequent sections of this paper we show how this definition
of contextuality applies to KCBS, LG, and EPRB-systems.
\begin{rem}
Clearly, a measure of contextuality could also be constructed as $\left(1+\Delta_{\min}\left(\mathfrak{S},\mathfrak{C}\right)\right)/\left(1+\Delta_{0}\left(\mathfrak{C}\right)\right)-1$,
$\left(\Delta_{\min}\left(\mathfrak{S},\mathfrak{C}\right)-\Delta_{0}\left(\mathfrak{C}\right)\right)/\left(\Delta_{\min}\left(\mathfrak{S},\mathfrak{C}\right)+\Delta_{0}\left(\mathfrak{C}\right)\right)$,
and in a variety of other ways. The only logically necessary aspect
of the definition is that $\cntx\left(\mathfrak{S},\mathfrak{C}\right)$
is zero when $\Delta_{\min}\left(\mathfrak{S},\mathfrak{C}\right)=\Delta_{0}\left(\mathfrak{C}\right)$
and positive otherwise. The simple difference is chosen because it
has been shown to have certain desirable properties \cite{Acacio_et_al},
but this choice is not critical for the present paper.
\end{rem}

\subsection{Conventions}

\subsubsection{Abuse of language}

To simplify notation we adopt the following convention: in a coupling
$\left(X^{*},Y^{*}\right)$ of two random variables $X,Y$ we drop
the asterisks and write simply $\left(X,Y\right)$. The abuse of language
thus introduced is common, if not universally accepted in quantum
physics, and we too conveniently resorted to it when discussing Specker's
magic boxes in our introductory section.

\subsubsection{Functions $\se$ and $\so$}

We will make use of the following notation. For any finite sequence
of real numbers $\left(a_{i}:i\in\left\{ 1,\ldots,n\right\} \right)$
we denote by
\begin{equation}
\se\left(a_{i}:i\in\left\{ 1,\ldots,n\right\} \right)=\se\left(a_{1},\ldots,a_{n}\right)=\max_{\textnormal{even number of }-\textnormal{'s}}\sum_{i=1}^{n}\left(\pm a_{i}\right),
\end{equation}
where each $\pm$ should be replaced with $+$ or $-$, and the maximum
is taken over all combinations of the signs containing an even number
of $-$'s.

Analogously,
\begin{equation}
\so\left(a_{i}:i\in\left\{ 1,\ldots,n\right\} \right)=\so\left(a_{1},\ldots,a_{n}\right)=\max_{\textnormal{odd number of }-\textnormal{'s}}\sum_{i=1}^{n}\left(\pm a_{i}\right).
\end{equation}

\begin{lem}
For any finite sequence of real numbers $\left(a_{i}:i\in\left\{ 1,\ldots,n\right\} \right),$
\begin{equation}
\se\left(a_{1},\ldots,a_{n}\right)=\sum\left|a_{i}\right|-2[a{}_{1}\ldots a_{k}<0]\min\left(\left|a_{1}\right|,\ldots,\left|a_{n}\right|\right),
\end{equation}
\begin{equation}
\so\left(a_{1},\ldots,a_{n}\right)=\sum\left|a_{i}\right|-2[a{}_{1}\ldots a_{k}>0]\min\left(\left|a_{1}\right|,\ldots,\left|a_{n}\right|\right).
\end{equation}

\end{lem}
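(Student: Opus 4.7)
The plan is to reduce both identities to a standard constrained-optimization argument over sign choices $\epsilon_i\in\{+1,-1\}$, where the parity condition is reinterpreted as a condition on the sign of the product $\prod_i \epsilon_i$. Note first that the unconstrained maximum of $\sum_i \epsilon_i a_i$ over all sign vectors $(\epsilon_1,\ldots,\epsilon_n)\in\{-1,+1\}^n$ equals $\sum_i |a_i|$, attained by taking $\epsilon_i=\mathrm{sgn}(a_i)$ (with an arbitrary choice whenever $a_i=0$). I will refer to this choice as the \emph{greedy} sign pattern. The number of minus signs in $(\epsilon_1,\ldots,\epsilon_n)$ is even precisely when $\prod_i \epsilon_i=+1$, and odd precisely when $\prod_i \epsilon_i=-1$; this is the key bridge between the parity constraint and a purely multiplicative condition.

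For $\se$, I would argue by cases on the sign of $a_1\cdots a_n$ (assuming for the moment that no $a_i$ vanishes). If $a_1\cdots a_n>0$, then the greedy pattern $\epsilon_i=\mathrm{sgn}(a_i)$ already has $\prod_i\epsilon_i=+1$, i.e.\ an even number of minus signs, so it is feasible and $\se(a_1,\ldots,a_n)=\sum|a_i|$, which matches the claimed formula since the Iverson bracket $[a_1\cdots a_n<0]$ is zero. If $a_1\cdots a_n<0$, then the greedy pattern is infeasible, and any feasible pattern differs from it in at least one index; moreover, because the parity of disagreement with the greedy pattern has to change, it must differ in an \emph{odd} number of indices, hence in at least one. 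Flipping the sign at index $i$ reduces the objective by $2|a_i|$, so the cheapest feasible modification flips a single index $i$ minimizing $|a_i|$, and costs $2\min_i|a_i|$. Thus $\se(a_1,\ldots,a_n)=\sum|a_i|-2\min_i|a_i|$, again matching the formula. The proof for $\so$ is symmetric, with the roles of the two parity classes exchanged; the correction term appears exactly when the greedy pattern has $\prod_i\epsilon_i=+1$, i.e.\ when $a_1\cdots a_n>0$.

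The only point requiring a bit of care is the degenerate case $a_i=0$ for some $i$, in which $a_1\cdots a_n=0$ and both Iverson brackets vanish. In this case, flipping $\epsilon_i$ for the zero entry changes the parity at zero cost, so the greedy optimum $\sum|a_i|$ is achievable under either parity constraint, and the formulas read $\se=\so=\sum|a_i|$, consistent with the brackets being zero. I would also briefly note that what appears as $a_1\ldots a_k$ in the stated formula is to be read as $a_1\cdots a_n$ (the product of all entries), which is the content used above.

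The main ``obstacle,'' such as it is, lies in the single-flip argument: making precise why one cannot do better by flipping more than one index when an adjustment is required. This is handled by the observation that each additional flip from the greedy pattern strictly decreases the objective by $2|a_j|\ge 0$, so among all feasible sign patterns the one obtained by a single flip of the smallest-magnitude entry dominates, and the parity is corrected by that single flip since one flip toggles $\prod_i\epsilon_i$. With this in place, both identities follow directly.
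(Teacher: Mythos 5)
Your proof is correct and complete: the reduction to choosing signs $\epsilon_i\in\{\pm1\}$ with $\prod_i\epsilon_i$ fixed, the greedy choice $\epsilon_i=\mathrm{sgn}(a_i)$, the observation that any parity-correcting deviation costs $2\sum_{j\in F}|a_j|$ over an odd-cardinality flip set $F$ (hence optimally a single flip of the smallest $|a_j|$), and the separate treatment of the case where some $a_i=0$ together establish both identities; you are also right that the $a_1\ldots a_k$ in the statement should be read as $a_1\cdots a_n$. The paper states this lemma without any proof, so there is nothing to compare against, but the argument you give is exactly the standard one the authors evidently had in mind.
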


\section{KCBS-systems}

\subsection{Main Theorem}
\begin{thm}[contextuality measure and criterion for KCBS-systems]
\label{thm:main B} In a KCBS-system $\left(\mathfrak{S},\mathfrak{C}\right)$,
with
\begin{equation}
\mathfrak{S}=\left\{ \left(V_{i},W_{i\oplus_{5}1}\right):i\in\left\{ 1,\ldots,5\right\} \right\} ,\mathfrak{C}=\left\{ \left(V_{i},W_{i}\right):i\in\left\{ 1,\ldots,5\right\} \right\} ,
\end{equation}
\textup{\emph{where $\oplus_{5}$ stands for circular addition of
$1$}} on $\left\{ 1,2,3,4,5\right\} $,
\begin{equation}
\Delta_{0}\left(\mathfrak{C}\right)=\frac{1}{2}\sum_{i=1}^{5}\left|\left\langle V_{i}\right\rangle -\left\langle W_{i}\right\rangle \right|,
\end{equation}
 
\begin{equation}
\Delta_{\min}\left(\mathfrak{S},\mathfrak{C}\right)=\frac{1}{2}\max\left(2\Delta_{0}\left(\mathfrak{C}\right),\;\so\left(\left\langle V_{i}W_{i\oplus_{5}1}\right\rangle :i\in\left\{ 1,\ldots,5\right\} \right)-3\right).
\end{equation}
Consequently, the degree of contextuality in the KCBS-system is
\begin{equation}
\cntx\left(\mathfrak{S},\mathfrak{C}\right)=\frac{1}{2}\max\left(0,\;\so\left(\left\langle V_{i}W_{i\oplus_{5}1}\right\rangle :i\in\left\{ 1,\ldots,5\right\} \right)-3-\sum_{i=1}^{5}\left|\left\langle V_{i}\right\rangle -\left\langle W_{i}\right\rangle \right|\right),
\end{equation}
and the system has a noncontextual description if and only if\emph{
}\textup{\emph{
\begin{equation}
\so\left(\left\langle V_{i}W_{i\oplus_{5}1}\right\rangle :i\in\left\{ 1,\ldots,5\right\} \right)\leq3+\sum_{i=1}^{5}\left|\left\langle V_{i}\right\rangle -\left\langle W_{i}\right\rangle \right|.
\end{equation}
}}\end{thm}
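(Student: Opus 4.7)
The plan is to compute $\Delta_{\min}(\mathfrak{S},\mathfrak{C})$ directly and then read off the contextuality criterion from the definition of $\cntx$. Set $t_{i}=\langle V_{i}W_{i\oplus_{5}1}\rangle$ for the fixed bunch correlations and let $c_{i}=\langle V_{i}^{*}W_{i}^{*}\rangle$ be the connection correlations realized in a coupling $S$ of $\mathfrak{S}$. Since the variables are $\pm1$, $\sum_{i=1}^{5}\Pr[V_{i}^{*}\neq W_{i}^{*}]=(5-\sum_{i}c_{i})/2$, so the task reduces to computing $\max_{S}\sum_{i}c_{i}$ over all couplings of $\mathfrak{S}$.

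I would first establish the lower bound $\Delta_{\min}\geq\max\bigl(\Delta_{0}(\mathfrak{C}),\,(\so(t_{i})-3)/2\bigr)$ by combining two independent bounds on $\sum_{i}c_{i}$. The first is the per-connection bound from Lemma~\ref{lem:Two binary}: regardless of $S$, the pair $(V_{i}^{*},W_{i}^{*})$ is some coupling of $\{V_{i},W_{i}\}$, hence $\Pr[V_{i}^{*}\neq W_{i}^{*}]\geq\tfrac{1}{2}|\langle V_{i}\rangle-\langle W_{i}\rangle|$, and summing gives $\Delta_{\min}\geq\Delta_{0}(\mathfrak{C})$. The second is a cyclic parity bound obtained by arranging the ten variables in the cycle $V_{1}W_{2}V_{2}W_{3}V_{3}W_{4}V_{4}W_{5}V_{5}W_{1}$. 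Each variable appears in exactly two adjacent edges, so the product of all ten adjacent-pair products around the cycle is identically $+1$, forcing an even number of sign flips. Equivalently, for every $(\epsilon_{1},\dots,\epsilon_{5})\in\{\pm1\}^{5}$ with an odd number of $-$'s, the pointwise inequality $\sum_{i}\epsilon_{i}V_{i}^{*}W_{i\oplus_{5}1}^{*}+\sum_{i}V_{i}^{*}W_{i}^{*}\leq8$ holds. Taking expectations and maximising over such $\epsilon$ yields $\sum_{i}c_{i}\leq8-\so(t_{1},\dots,t_{5})$, hence $\Delta_{\min}\geq(\so(t_{i})-3)/2$.

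For the matching upper bound I would exhibit an explicit coupling $S$ achieving $\max\bigl(\Delta_{0},(\so-3)/2\bigr)$, splitting into two regimes. When $\so(t_{i})\leq 3+2\Delta_{0}$, the target is a coupling in which every connection $(V_{i}^{*},W_{i}^{*})$ is the maximal coupling of Lemma~\ref{lem:Two binary}; the hypothesis on $\so$ is exactly what is needed to verify every cyclic inequality on the 10-cycle of $\pm1$ variables with the prescribed pairwise marginals (five bunch marginals plus five maximal connection marginals), and by a Fine-type sufficiency result for cycles of binary variables such a joint distribution then exists. When $\so(t_{i})>3+2\Delta_{0}$, I would instead construct a coupling that saturates the cyclic bound, $\sum_{i}c_{i}=8-\so(t_{i})$, by relaxing maximality on just enough connections to absorb the excess while preserving the bunch marginals; one then checks that the remaining cyclic inequalities are non-binding.

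The main obstacle is this achievability step: translating the necessary cyclic inequality $\so_{10}\leq 8$ into an explicit $\pm1$ joint distribution on ten variables with the prescribed pair marginals. This is precisely the Fine-type sufficiency for cyclic systems proved in the companion paper \cite{KDL2014}, so I would invoke (or adapt) that construction rather than rebuild it here. Once $\Delta_{\min}=\tfrac{1}{2}\max(2\Delta_{0},\so-3)$ is in hand, the stated formula $\cntx(\mathfrak{S},\mathfrak{C})=\tfrac{1}{2}\max(0,\so-3-2\Delta_{0})$ and the criterion $\so(\langle V_{i}W_{i\oplus_{5}1}\rangle)\leq 3+\sum_{i}|\langle V_{i}\rangle-\langle W_{i}\rangle|$ for a noncontextual description follow immediately from $\cntx=\Delta_{\min}-\Delta_{0}$.
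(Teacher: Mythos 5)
Your overall strategy coincides with the paper's: everything is made to rest on the compatibility criterion for the $10$-cycle of binary variables (Lemma \ref{lem:1-2}, i.e., $\so$ of all ten pairwise expectations bounded by $8$), after which $\Delta_{\min}$ is obtained by optimizing the connection expectations $c_{i}=\left\langle V_{i}^{*}W_{i}^{*}\right\rangle $ subject to that criterion and to the fixed marginals; the paper delegates that optimization to a computer-assisted argument following Ref.~\cite{DK_Bell_LG2}. Your lower bound is correct and is a nice explicit rendering of what the paper leaves implicit: the per-connection bound from Lemma \ref{lem:Two binary} plus the parity argument around the cycle $V_{1}W_{2}V_{2}W_{3}\ldots V_{5}W_{1}$ do give $\Delta_{\min}\geq\frac{1}{2}\max\left(2\Delta_{0},\,\so(t)-3\right)$ rigorously.

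The gap is in the achievability step, and it is not closed merely by citing the Fine-type sufficiency of Ref.~\cite{KDL2014}. To invoke that sufficiency you must verify \emph{all} odd-parity inequalities $\so\left(t_{1},\ldots,t_{5},c_{1},\ldots,c_{5}\right)\leq8$, which split into the two families $\so(t)+\se(c)\le8$ and $\se(t)+\so(c)\le8$. Your regime hypothesis $\so(t)\leq3+2\Delta_{0}$ is equivalent only to the first family, and only in the generic case where every $\left|\left\langle V_{i}\right\rangle -\left\langle W_{i}\right\rangle \right|\leq1$, so that the maximal connections $c_{i}=1-\left|\left\langle V_{i}\right\rangle -\left\langle W_{i}\right\rangle \right|$ are nonnegative and $\se(c)=5-2\Delta_{0}$. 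The second family is \emph{not} implied by it: one must bring in the marginal constraints $\left|t_{i}\right|\leq1-\left|\left\langle V_{i}\right\rangle -\left\langle W_{i\oplus_{5}1}\right\rangle \right|$ and a triangle inequality for the single-variable expectations around the cycle to show $\se(t)+\so(c)\leq8$, so the claim that the regime hypothesis is ``exactly what is needed to verify every cyclic inequality'' is false as stated. Likewise, in the second regime, ``relaxing maximality on just enough connections'' is precisely the nontrivial optimization: you must specify which $c_{i}$ are lowered and by how much, check each stays in the interval realizable by some coupling of $\left\{ V_{i},W_{i}\right\} $, and re-verify both families of inequalities for the perturbed vector. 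These verifications are the actual content of the proof that the paper relegates to computer assistance; as written, your argument establishes only the inequality $\Delta_{\min}\geq\frac{1}{2}\max\left(2\Delta_{0},\,\so-3\right)$, not the equality.
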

\begin{proof}
The computer-assisted proof is based on Lemma \ref{lem:1-2} below,
and its details, omitted here, are analogous to those in the proofs
of Theorems 3-6 in Appendix of Ref. \cite{DK_Bell_LG2}.\end{proof}
\begin{lem}
\label{lem:1-2}The necessary and sufficient condition for the connection
couplings $\left\{ \left(V_{i},W_{i}\right):i\in\left\{ 1,\ldots,5\right\} \right\} $
to be compatible with the observed pairs $\left\{ \left(V_{i},W_{i\oplus_{5}1}\right):i\in\left\{ 1,\ldots,5\right\} \right\} $
is 
\begin{equation}
\so\left(\left\langle V_{i}W_{i\oplus_{5}1}\right\rangle ,\left\langle V_{i}W_{i}\right\rangle :i\in\left\{ 1,\ldots,5\right\} \right)\leq8,
\end{equation}
which can be equivalently written as
\[
\begin{array}{l}
\so\left(\left\langle V_{i}W_{i\oplus_{4}1}\right\rangle :i\in\left\{ 1,\ldots,5\right\} \right)+\se\left(\left\langle V_{i}W_{i}\right\rangle :i\in\left\{ 1,\ldots,5\right\} \right)\le8,\\
\\
\se\left(\left\langle V_{i}W_{i\oplus_{4}1}\right\rangle :i\in\left\{ 1,\ldots,5\right\} \right)+\so\left(\left\langle V_{i}W_{i}\right\rangle :i\in\left\{ 1,\ldots,5\right\} \right)\le8.
\end{array}
\]
\end{lem}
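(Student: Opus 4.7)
The plan is to recognize the joint compatibility question for the ten random variables $V_1,\ldots,V_5,W_1,\ldots,W_5$ as the classical compatibility problem for a cyclic system of ten binary variables, and then invoke the $n$-cycle criterion established in Ref.~\cite{KDL2014}.

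First, I would arrange the ten variables around a circle in the order $V_1,W_2,V_2,W_3,V_3,W_4,V_4,W_5,V_5,W_1$ and observe that the pairs prescribed by the observed bunches and by the proposed connection couplings are precisely the ten consecutive pairs of this cycle. The observed pairs $(V_i,W_{i\oplus_5 1})$ and the connection pairs $(V_i,W_i)$ alternate around the 10-cycle and together exhaust all adjacent-pair correlations. Consequently, compatibility of the connection couplings with the observed bunches is equivalent to the existence of a joint distribution of ten binary $\pm 1$ variables that reproduces all ten prescribed adjacent-pair correlations $\langle V_i W_{i\oplus_5 1}\rangle$ and $\langle V_i W_i\rangle$.

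Second, I would apply the $n$-cycle compatibility theorem of Ref.~\cite{KDL2014}: for a cyclic system of $n$ binary $\pm 1$ variables, consecutive correlations $c_1,\ldots,c_n$ admit a joint distribution if and only if $\so(c_1,\ldots,c_n)\leq n-2$. The necessity direction is the short Bell-type observation that for any $\pm 1$ realization, $\sum_i \epsilon_i X_i X_{i+1}\leq n-2$ whenever $\prod_i \epsilon_i=-1$, because $\prod_i(\epsilon_i X_i X_{i+1})=-1$ forces at least one factor to equal $-1$; averaging transfers this to the correlations. Sufficiency is the main obstacle: it requires the explicit construction of a coupling saturating the extremal cyclic correlations, which is exactly what Ref.~\cite{KDL2014} supplies (and is also what makes the overall argument ``computer-assisted'' in the sense referenced above). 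Instantiating at $n=10$ yields the stated inequality $\so\leq 8$.

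Third, for the two equivalent reformulations, I would use the elementary splitting identity
\begin{equation*}
\so(a_1,\ldots,a_k,b_1,\ldots,b_\ell)=\max\bigl(\so(a_1,\ldots,a_k)+\se(b_1,\ldots,b_\ell),\;\se(a_1,\ldots,a_k)+\so(b_1,\ldots,b_\ell)\bigr),
\end{equation*}
which holds because a sign pattern with an odd total number of minuses must place an odd number of minuses in exactly one of the two blocks. Splitting the ten correlators according to the \emph{observed} group $\langle V_i W_{i\oplus_5 1}\rangle$ versus the \emph{connection} group $\langle V_i W_i\rangle$, the single inequality $\so\leq 8$ is thereby equivalent to the conjunction of the two stated inequalities. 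Thus the entire argument is modular: the content-rich step is the cyclic-system theorem, and the rest is elementary combinatorics on sign patterns.
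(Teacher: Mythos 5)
Your proposal is correct and takes essentially the same route as the paper: the paper gives no self-contained proof of this lemma, treating it as the $n=5$ instance of the general cyclic compatibility criterion (Lemma \ref{lem:supporting}) whose proof is deferred to Ref.~\cite{KDL2014}, which is precisely the $N$-cycle theorem you invoke with $N=10$. Your reduction to the alternating 10-cycle of the ten variables, the parity argument for necessity, and the block-splitting identity relating $\so$ of the concatenated sequence to the two displayed inequalities are all correct and merely make explicit what the paper leaves implicit.
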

\begin{rem}
The compatibility in the formulation of the lemma means the existence
of a coupling $S$ for $\mathfrak{S}$ with given marginals $\left\{ \left(V_{i},W_{i}\right):i\in\left\{ 1,\ldots,5\right\} \right\} $
and given (coupled) connections $\left\{ \left(V_{i},W_{i\oplus_{5}1}\right):i\in\left\{ 1,\ldots,5\right\} \right\} $.
\end{rem}

\subsection{Special cases}

In a CC KCBS-system, $\Delta_{0}\left(\mathfrak{C}\right)=0$, the
criterion for noncontextuality acquires the form
\begin{equation}
\so\left(\left\langle V_{i}W_{i\oplus_{5}1}\right\rangle :i\in\left\{ 1,\ldots,5\right\} \right)\leq3.\label{eq:Klyachko w/o signaling}
\end{equation}
If, in addition, the KCBS-exclusion is satisfied,%
\footnote{This means that the directions in the 3D real Hilbert space are chosen
strictly in accordance with \cite{Klyachko}, with no experimental
errors, signaling, or contextual biases involved. In this case, the
values $\left(+1,+1\right)$ for paired measurements $\left(V_{i},W_{i\oplus_{5}1}\right)$
are excluded by quantum theory.%
} i.e., in every $\left\langle V_{i}W_{i\oplus_{5}1}\right\rangle $,
\begin{equation}
\Pr\left[V_{i}=1,W_{i\oplus_{5}1}=1\right]=0,
\end{equation}
then we have
\begin{equation}
\left\langle V_{i}W_{i\oplus_{5}1}\right\rangle =1-2\left(p_{i}+p_{i\oplus_{5}1}\right),
\end{equation}
where 
\begin{equation}
p_{i}=\Pr\left[V_{i}=1\right]=\Pr\left[W_{i}=1\right],\quad i\in\left\{ 1,\ldots,5\right\} .
\end{equation}
It follows that
\begin{equation}
\sum_{i=1}^{5}p_{i}\leq2.\label{eq:Klyachko inequality}
\end{equation}
This is the KCBS inequality, that has been derived in Ref. \cite{Klyachko}
as a necessary condition for noncontextuality. As it turns out (we
omit the simple proof), this condition is also necessary.
\begin{thm}
In a CC KCBS-system with KCBS exclusion, \eqref{eq:Klyachko w/o signaling}
is equivalent to \eqref{eq:Klyachko inequality}.
\end{thm}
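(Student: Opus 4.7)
The plan is to reduce the $\so$-inequality to the Klyachko inequality by explicit computation, exploiting the fact that CC and KCBS exclusion together determine each pair distribution from the two marginal probabilities.

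First I would observe that CC plus the exclusion $\Pr[V_i=1,W_{i\oplus_5 1}=1]=0$ forces the full joint distribution of $(V_i,W_{i\oplus_5 1})$ to be $\Pr[+,-]=p_i$, $\Pr[-,+]=p_{i\oplus_5 1}$, $\Pr[-,-]=1-p_i-p_{i\oplus_5 1}$. This yields both the stated identity $c_i := \langle V_i W_{i\oplus_5 1}\rangle = 1-2(p_i+p_{i\oplus_5 1})$ and, crucially, the cyclic-adjacency constraint $p_i+p_{i\oplus_5 1}\le 1$ for all $i$ (from nonnegativity of $\Pr[-,-]$). The direction \eqref{eq:Klyachko w/o signaling}$\Rightarrow$\eqref{eq:Klyachko inequality} is then immediate: the all-minus sign pattern (five minuses, which is odd) is admissible in $\so$, so
\begin{equation}
\so(c_1,\ldots,c_5)\ge -\sum_{i=1}^5 c_i = -5+4\sum_{i=1}^5 p_i,
\end{equation}
whence $\so\le 3$ gives $\sum p_i\le 2$.

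For the reverse direction I would rewrite, for any signs $\epsilon_i\in\{\pm 1\}$ with an odd number of $-$'s,
\begin{equation}
\sum_{i=1}^5\epsilon_i c_i = \sum_{i=1}^5\epsilon_i - 2\sum_{i=1}^5 p_i\,\delta_i,\qquad \delta_i:=\epsilon_i+\epsilon_{i\ominus_5 1}\in\{-2,0,2\},
\end{equation}
and show each such expression is $\le 3$ under the assumption $\sum p_i\le 2$ together with the constraints $p_i\ge 0$ and $p_i+p_{i\oplus_5 1}\le 1$. I would split by the number of $-$'s among the $\epsilon_i$: five, three, or one. The five-minus case reproduces $-5+4\sum p_i\le 3$ and is precisely $\sum p_i\le 2$. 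The one-minus case has $\sum\epsilon_i=3$ and the $\delta_i$ are three $+2$'s and two $0$'s, so $-2\sum p_i\delta_i\le 0$ and the bound is trivial. The three-minus case splits further: if the two pluses are cyclically adjacent, the $\delta$-pattern has one $+2$, two $0$'s, and two $-2$'s at cyclically consecutive indices (the interior edges of the three-minus run), so the offending term $4(p_k+p_{k\oplus_5 1})$ is bounded by $4$ by the exclusion constraint; if the two pluses are non-adjacent, only one $\delta_i$ equals $-2$, and the bound $4p_k\le 4$ suffices.

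Putting the cases together shows that among all odd sign-patterns only the all-minus pattern produces a nontrivial constraint, and that constraint is exactly $\sum p_i\le 2$, establishing the equivalence. There is no real obstacle here; the only thing to take care of is the three-minus, adjacent-plus subcase, and the whole point is that the same KCBS exclusion which produces the formula for $c_i$ also produces the adjacency bound $p_i+p_{i\oplus_5 1}\le 1$ that neutralizes precisely that subcase. The proof is thus a short case analysis in the five $\epsilon$-patterns (up to cyclic symmetry).
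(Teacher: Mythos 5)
Your proposal is correct, and it actually does more than the paper: the paper proves only the direction $\eqref{eq:Klyachko w/o signaling}\Rightarrow\eqref{eq:Klyachko inequality}$ (by exactly your observation that the all-minus pattern is admissible in $\so$ and yields $-5+4\sum p_i\le 3$), and then explicitly states ``we omit the simple proof'' for the converse. Your case analysis over the odd sign patterns supplies that omitted converse, and it checks out: the identity $\sum_i\epsilon_i c_i=\sum_i\epsilon_i-2\sum_j(\epsilon_j+\epsilon_{j\ominus_5 1})p_j$ is right, the one-minus case is trivial since all nonzero $\delta_j$ are $+2$, and in the three-minus case the indices with $\delta_j=-2$ are precisely the ``interior'' positions of the runs of minuses, giving either a single $-2$ (pluses non-adjacent, bounded by $4p_k\le4$) or two $-2$'s at cyclically adjacent indices (pluses adjacent, bounded by $4(p_k+p_{k\oplus_5 1})\le4$ via the nonnegativity of $\Pr[-,-]$). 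Your identification of the constraint $p_i+p_{i\oplus_5 1}\le1$ as the essential extra ingredient is exactly right --- without it the three-minus adjacent-plus pattern can violate the bound even when $\sum p_i\le2$ --- so the argument is complete.
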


\section{EPRB-systems}

\subsection{Main Theorem}
\begin{thm}[contextuality measure and criterion for EPRB-systems]
\label{thm:main B-1} In an EPRB-system $\left(\mathfrak{S},\mathfrak{C}\right)$,
with

\begin{equation}
\mathfrak{S}=\left\{ \left(V_{i},W_{i\oplus_{4}1}\right):i\in\left\{ 1,\ldots,4\right\} \right\} ,\mathfrak{C}=\left\{ \left(V_{i},W_{i}\right):i\in\left\{ 1,\ldots,4\right\} \right\} ,
\end{equation}
\textup{\emph{where $\oplus_{4}$ stands for circular addition of
$1$ on }}$\left\{ 1,2,3,4\right\} $,
\begin{equation}
\Delta_{0}\left(\mathfrak{C}\right)=\frac{1}{2}\sum_{i=1}^{4}\left|\left\langle V_{i}\right\rangle -\left\langle W_{i}\right\rangle \right|,
\end{equation}
and 
\begin{equation}
\Delta_{\min}\left(\mathfrak{S},\mathfrak{C}\right)=\frac{1}{2}\max\left(2\Delta_{0}\left(\mathfrak{C}\right),\;\so\left(\left\langle V_{i}W_{i\oplus_{4}1}\right\rangle :i\in\left\{ 1,\ldots,4\right\} \right)-2\right).
\end{equation}
Consequently, the degree of contextuality in the EPRB-system is
\begin{equation}
\cntx\left(\mathfrak{S},\mathfrak{C}\right)=\frac{1}{2}\max\left(0,\so\left(\left\langle V_{i}W_{i\oplus_{4}1}\right\rangle :i\in\left\{ 1,\ldots,4\right\} \right)-2-\sum_{i=1}^{4}\left|\left\langle V_{i}\right\rangle -\left\langle W_{i}\right\rangle \right|\right),
\end{equation}
and the system has a noncontextual description if and only if\emph{
}\textup{\emph{
\begin{equation}
\so\left(\left\langle V_{i}W_{i\oplus_{4}1}\right\rangle :i\in\left\{ 1,\ldots,4\right\} \right)\leq2+\sum_{i=1}^{4}\left|\left\langle V_{i}\right\rangle -\left\langle W_{i}\right\rangle \right|.
\end{equation}
}}\end{thm}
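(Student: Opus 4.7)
The plan is to mirror the structure of the KCBS proof: establish a 4-cycle analogue of Lemma~\ref{lem:1-2}, then carry out a short two-constraint optimization that yields the stated formula for $\Delta_{\min}$. Concretely, I would first prove that the connection couplings $\{(V_i^{*},W_i^{*}) : i\in\{1,\ldots,4\}\}$ can be realized jointly with the observed bunches $\{(V_i,W_{i\oplus_4 1}) : i\in\{1,\ldots,4\}\}$ if and only if
\[
\so(\langle V_i W_{i\oplus_4 1}\rangle,\langle V_i W_i\rangle : i\in\{1,\ldots,4\})\le 6,
\]
equivalently (by the same split as in Lemma~\ref{lem:1-2}) the two inequalities obtained by pairing $\so$ with $\se$ over the two index sets summing to $6$. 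This is the signalling-augmented CHSH/Fine condition recast in $\so$-form, and would be proved by the same polytope-vertex enumeration that yields Theorems~3--6 in the Appendix of Ref.~\cite{DK_Bell_LG2}.

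With this compatibility lemma in hand, I would translate the objective: for any coupling $S$, writing $c_i=\langle V_i^{*}W_i^{*}\rangle$, one has $\Pr[V_i^{*}\ne W_i^{*}]=(1-c_i)/2$, hence
\[
\sum_{i=1}^{4}\Pr[V_i^{*}\ne W_i^{*}]=2-\tfrac{1}{2}\sum_{i=1}^{4}c_i,
\]
so minimizing the left-hand side amounts to maximizing $\sum c_i$. Two upper bounds then apply. The first is the per-connection marginal bound from Lemma~\ref{lem:Two binary}, $c_i\le 1-|\langle V_i\rangle-\langle W_i\rangle|$, which sums to $\sum c_i\le 4-2\Delta_0(\mathfrak{C})$. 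The second comes from specializing the compatibility lemma to $\delta_i=+1$ for all $i$, forcing the odd number of minus signs to reside entirely among the signs $\epsilon_i$ attached to the observed expectations; the strongest such inequality is $\so(\langle V_i W_{i\oplus_4 1}\rangle : i\in\{1,\ldots,4\})+\sum c_i\le 6$. Taking the tighter of the two immediately gives
\[
\Delta_{\min}(\mathfrak{S},\mathfrak{C})\ge\tfrac{1}{2}\max\bigl(2\Delta_0(\mathfrak{C}),\;\so(\langle V_i W_{i\oplus_4 1}\rangle : i\in\{1,\ldots,4\})-2\bigr).
\]

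For the matching upper bound I would construct a coupling saturating whichever of the two constraints is binding: pick the $c_i^{*}$'s either at their maximal-coupling values $1-|\langle V_i\rangle-\langle W_i\rangle|$, or distributed so that $\sum c_i^{*}=6-\so(\langle V_i W_{i\oplus_4 1}\rangle)$ while remaining inside the individual marginal intervals, and verify that the remaining $\so$-inequalities from the compatibility lemma are then automatically satisfied. Because the lemma is necessary and sufficient, this choice of $c_i^{*}$'s extends to an honest coupling of $\mathfrak{S}$, so the lower bound is attained. Plugging the resulting $\Delta_{\min}$ and $\Delta_0$ into the definition of $\cntx$ gives the stated measure, and the noncontextuality criterion is simply $\cntx(\mathfrak{S},\mathfrak{C})=0$, i.e., $\so(\langle V_iW_{i\oplus_4 1}\rangle : i\in\{1,\ldots,4\})\le 2+\sum_{i=1}^{4}|\langle V_i\rangle-\langle W_i\rangle|$.

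The main obstacle is the compatibility lemma. In the non-CC regime the marginals of paired $V_i$ and $W_i$ need not agree, so the classical Fine derivation does not apply directly; one has to enumerate the facets of the extended coupling polytope (with variable single-variable marginals) and show they collapse to the clean condition $\so\le 6$ as well as to the equivalent $\so+\se$ splits. Once that combinatorial/LP step is verified, the optimization above is essentially an LP dual whose two active constraints produce the two branches of the $\max$.
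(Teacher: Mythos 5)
Your proposal takes essentially the same route as the paper: the proof there likewise rests on the compatibility condition $\so\left(\left\langle V_{i}W_{i\oplus_{4}1}\right\rangle ,\left\langle V_{i}W_{i}\right\rangle :i\right)\leq6$ (Lemma \ref{lem:1-2-1}) followed by a linear-programming optimization over the connection expectations $c_{i}$, with the details deferred to the computer-assisted elimination in Ref.~\cite{DK_Bell_LG2}. The only caution is that your claim that the remaining $\so$-inequalities are ``automatically satisfied'' by the saturating choice of $c_{i}^{*}$ is precisely that non-trivial elimination step---it holds only by virtue of the realizability constraints tying each $\left\langle V_{i}W_{i\oplus_{4}1}\right\rangle$ to the marginals $\left\langle V_{i}\right\rangle$ and $\left\langle W_{i\oplus_{4}1}\right\rangle$, not as a formal consequence of the two active constraints alone---so that part remains exactly the computer-assisted portion of the paper's argument rather than an automatic check.
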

\begin{proof}
The computer-assisted proof is based on Lemma \ref{lem:1-2-1} below,
and its details, omitted here, can be found in Ref. \cite{DK_Bell_LG2},
Appendix and Theorems 3 and 5.\end{proof}
\begin{lem}
\label{lem:1-2-1}The necessary and sufficient condition for the connections
$\left\{ \left(V_{i},W_{i}\right):i\in\left\{ 1,\ldots,4\right\} \right\} $
to be compatible with the observed pairs $\left\{ \left(V_{1},W_{2}\right),\left(V_{2},W_{3}\right),\left(V_{3},W_{4}\right),\left(V_{4},W_{1}\right)\right\} $
is
\begin{equation}
\so\left(\left\langle V_{i}W_{i\oplus_{5}1}\right\rangle ,\left\langle V_{i}W_{i}\right\rangle :i\in\left\{ 1,\ldots,4\right\} \right)\leq6,
\end{equation}
which can be equivalently written as
\begin{equation}
\begin{array}{l}
\so\left(\left\langle V_{i}W_{i\oplus_{4}1}\right\rangle :i\in\left\{ 1,\ldots,4\right\} \right)+\se\left(\left\langle V_{i}W_{i}\right\rangle :i\in\left\{ 1,\ldots,4\right\} \right)\le6,\\
\\
\se\left(\left\langle V_{i}W_{i\oplus_{4}1}\right\rangle :i\in\left\{ 1,\ldots,4\right\} \right)+\so\left(\left\langle V_{i}W_{i}\right\rangle :i\in\left\{ 1,\ldots,4\right\} \right)\le6.
\end{array}\label{eq:compatibility}
\end{equation}
 
\end{lem}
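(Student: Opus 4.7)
The plan is to recast the compatibility question as a joint-distribution existence problem on eight binary variables. A coupling $S$ of $\mathfrak{S}$ that realizes the prescribed connection couplings is precisely a joint distribution for $V_{1},W_{1},V_{2},W_{2},V_{3},W_{3},V_{4},W_{4}$ whose eight two-marginals on the pairs $(V_{i},W_{i\oplus_{4}1})$ and $(V_{i},W_{i})$ agree with the prescribed distributions. Each of the eight variables appears in exactly two of these eight pairs, so the pair graph is an $8$-cycle on the eight vertices, namely $V_{1}-W_{2}-V_{2}-W_{3}-V_{3}-W_{4}-V_{4}-W_{1}-V_{1}$, alternating observed edges with connection edges. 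The lemma therefore reduces to the classical Bell--Fine question of when binary random variables placed at the vertices of an $n$-cycle with specified pair marginals admit a joint distribution.

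I would then attack the reduced problem via linear-programming duality. Treating the $2^{8}=256$ joint atoms as nonnegative unknowns and the $8\cdot 4=32$ pair-marginal equations as linear constraints (the single-vertex marginals being automatically consistent, since each $V_{i}$ and each $W_{i}$ inherits a fixed distribution from the observed pair in which it appears, and the connection coupling preserves that marginal), Farkas' lemma supplies a finite list of linear inequalities on the eight expectations $\langle V_{i}W_{i\oplus_{4}1}\rangle$ and $\langle V_{i}W_{i}\rangle$ that characterize feasibility. The computer-assisted step, parallel to the treatment in Ref.~\cite{DK_Bell_LG2}, is to enumerate the vertices of the dual polytope and verify that, after exploiting the sign-flip symmetries $V_{i}\mapsto -V_{i}$, $W_{i}\mapsto -W_{i}$, all nontrivial facet inequalities collapse onto the single family $\so(\langle V_{i}W_{i\oplus_{4}1}\rangle,\langle V_{i}W_{i}\rangle:i\in\{1,\ldots,4\})\le 6$. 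This is exactly the $n-2$ bound for an $n=8$ cycle and agrees with the general cyclic-system criterion of the companion paper \cite{KDL2014}.

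For the equivalent reformulation \eqref{eq:compatibility}, I would invoke the elementary identity
\[
\so(A\sqcup B)\;=\;\max\bigl(\so(A)+\se(B),\;\se(A)+\so(B)\bigr),
\]
valid for any two finite lists $A$ and $B$ of reals, since an odd total number of minus signs across $A\sqcup B$ arises either as odd-in-$A$ plus even-in-$B$, or as even-in-$A$ plus odd-in-$B$. Taking $A=\{\langle V_{i}W_{i\oplus_{4}1}\rangle\}$ and $B=\{\langle V_{i}W_{i}\rangle\}$ then splits the single inequality $\so(A\sqcup B)\le 6$ into the two displayed conditions.

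The main obstacle is the facet-enumeration step: although the cut-polytope of an $8$-cycle is in principle well understood, producing a fully rigorous complete facet list without machine assistance is cumbersome, which is precisely why the authors describe the proof as computer-assisted. The cleanest self-contained alternative is to invoke the cyclic-system theorem of \cite{KDL2014} as a black box, after which everything else--the translation of feasibility into a cycle-compatibility question, the Farkas step, and the $\se/\so$ decomposition--is routine.
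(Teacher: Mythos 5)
Your proposal matches the paper's (deferred) approach: the paper states this lemma without an internal proof, pointing to the computer-assisted linear-programming/facet-enumeration arguments of Ref.~\cite{DK_Bell_LG2} and the general cyclic result of Ref.~\cite{KDL2014}, which is exactly the reduction you describe --- viewing compatibility as the existence of a joint distribution on the $8$-cycle $V_{1}-W_{2}-V_{2}-\cdots-W_{1}-V_{1}$ with prescribed edge marginals (vertex marginals being automatically consistent because each connection coupling preserves the distributions of $V_{i}$ and $W_{i}$), followed by a Farkas/facet-enumeration step giving the cycle bound $8-2=6$. Your splitting identity $\so(A\sqcup B)=\max\bigl(\so(A)+\se(B),\ \se(A)+\so(B)\bigr)$ is correct and properly accounts for the equivalence of the single inequality with the displayed pair, so the outline is sound and faithful to the paper's (externally documented) argument.
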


\subsection{Special case}

In a CC EPRB-system, $\Delta_{0}\left(\mathfrak{C}\right)=0$, the
criterion for noncontextuality acquires the form
\begin{equation}
\so\left(\left\langle V_{i}W_{i\oplus_{4}1}\right\rangle :i\in\left\{ 1,\ldots,4\right\} \right)\leq2,
\end{equation}
which is the standard CHSH inequalities \cite{10CH,15Fine,9CHSH},
presentable in a more familiar way as\emph{
\begin{equation}
\begin{array}{c}
-2\leq\left\langle V_{1}W_{2}\right\rangle +\left\langle V_{2}W_{3}\right\rangle +\left\langle V_{3}W_{4}\right\rangle -\left\langle V_{4}W_{1}\right\rangle \leq2,\\
-2\leq\left\langle V_{1}W_{2}\right\rangle +\left\langle V_{2}W_{3}\right\rangle -\left\langle V_{3}W_{4}\right\rangle +\left\langle V_{4}W_{1}\right\rangle \leq2,\\
-2\leq\left\langle V_{1}W_{2}\right\rangle -\left\langle V_{2}W_{3}\right\rangle +\left\langle V_{3}W_{4}\right\rangle +\left\langle V_{4}W_{1}\right\rangle \leq2,\\
-2\leq-\left\langle V_{1}W_{2}\right\rangle +\left\langle V_{2}W_{3}\right\rangle +\left\langle V_{3}W_{4}\right\rangle +\left\langle V_{4}W_{1}\right\rangle \leq2.
\end{array}
\end{equation}
}

\section{SZLG-systems}

\subsection{Main Theorem}
\begin{thm}[contextuality measure and criterion for SZLG-systems]
\label{thm:main B-1-1} In an SZLG-system $\left(\mathfrak{S},\mathfrak{C}\right)$,
with
\begin{equation}
\mathfrak{S}=\left\{ \left(V_{i},W_{i\oplus_{3}1}\right):i\in\left\{ 1,2,3\right\} \right\} ,\mathfrak{C}=\left\{ \left(V_{i},W_{i}\right):i\in\left\{ 1,2,3\right\} \right\} ,
\end{equation}
\textup{\emph{where $\oplus_{3}$ stands for circular addition of
$1$ on }}$\left\{ 1,2,3\right\} $,
\begin{equation}
\Delta_{0}\left(\mathfrak{C}\right)=\frac{1}{2}\sum_{i=1}^{3}\left|\left\langle V_{i}\right\rangle -\left\langle W_{i}\right\rangle \right|,
\end{equation}
 
\begin{equation}
\Delta_{\min}\left(\mathfrak{S},\mathfrak{C}\right)=\frac{1}{2}\max\left(2\Delta_{0}\left(\mathfrak{C}\right),\so\left(\left\langle V_{i}W_{i\oplus_{3}1}\right\rangle :i\in\left\{ 1,2,3\right\} \right)-1\right).
\end{equation}
Consequently, the degree of contextuality in the SZLG-system is
\begin{equation}
\cntx\left(\mathfrak{S},\mathfrak{C}\right)=\frac{1}{2}\max\left(0,\so\left(\left\langle V_{i}W_{i\oplus_{3}1}\right\rangle :i\in\left\{ 1,2,3\right\} \right)-1-\sum_{i=1}^{3}\left|\left\langle V_{i}\right\rangle -\left\langle W_{i}\right\rangle \right|\right),
\end{equation}
and the system has a noncontextual description if and only if\emph{
}\textup{\emph{
\begin{equation}
\so\left(\left\langle V_{i}W_{i\oplus_{3}1}\right\rangle :i\in\left\{ 1,2,3\right\} \right)\leq1+\sum_{i=1}^{3}\left|\left\langle V_{i}\right\rangle -\left\langle W_{i}\right\rangle \right|.
\end{equation}
}}\end{thm}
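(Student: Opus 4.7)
The plan is to mirror the strategy used for the KCBS and EPRB theorems. First I would establish the SZLG analog of Lemmas \ref{lem:1-2} and \ref{lem:1-2-1}: a coupling $S$ for $\mathfrak{S}$ with observed marginals $(V_i,W_{i\oplus_3 1})$ and prescribed connection-coupling expectations $\langle V_i^{*}W_i^{*}\rangle$ exists if and only if
\[
\so\!\left(\langle V_iW_{i\oplus_3 1}\rangle,\langle V_i^{*}W_i^{*}\rangle:i\in\{1,2,3\}\right)\le 4.
\]
The structural reason is that the six variables $V_1,W_2,V_2,W_3,V_3,W_1$ form a $6$-cycle in which the edges alternate between ``observed'' pairs and ``connection-coupled'' pairs; the criterion is therefore the Fine/Suppes--Zanotti compatibility theorem specialized to a cyclic system with $n=3$ (i.e.\ a cycle of length $2n=6$), which is the case of the general cyclic-$n$ criterion of Ref.~\cite{KDL2014}. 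As in the KCBS and EPRB proofs, the equivalent $\se$/$\so$ split into two inequalities (analogous to \eqref{eq:compatibility}) can be used for the subsequent LP elimination.

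Given the compatibility lemma, computing $\Delta_{\min}(\mathfrak{S},\mathfrak{C})$ reduces to a one-shot linear program. Writing $c_i=\langle V_iW_{i\oplus_3 1}\rangle$ and $d_i=\langle V_i^{*}W_i^{*}\rangle$, the quantity to minimize is
\[
\sum_{i=1}^{3}\Pr[V_i^{*}\ne W_i^{*}]=\tfrac{1}{2}\Bigl(3-\sum_{i=1}^{3}d_i\Bigr),
\]
so I must maximize $\sum_i d_i$ subject to two families of constraints: (i) each $d_i$ must be realizable by \emph{some} coupling of $V_i,W_i$, which by Lemma \ref{lem:Two binary} gives $d_i\le 1-|\langle V_i\rangle-\langle W_i\rangle|$ and hence $\sum_i d_i\le 3-2\Delta_0(\mathfrak{C})$; and (ii) the compatibility bound $\so(c,d)\le 4$. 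For (ii), choosing $+$ signs on all three $d_i$ and the $\so$-optimal odd-sign pattern on $c$ yields a valid odd-sign configuration on $(c,d)$, so $\so(c,d)\ge \so(c)+\sum_i d_i$, i.e.\ $\sum_i d_i\le 4-\so(c)$. Combining,
\[
\sum_{i=1}^{3}d_i\le\min\!\bigl(3-2\Delta_0(\mathfrak{C}),\;4-\so(c)\bigr),
\]
which already delivers the lower bound $\Delta_{\min}(\mathfrak{S},\mathfrak{C})\ge\tfrac{1}{2}\max\!\bigl(2\Delta_0(\mathfrak{C}),\,\so(c)-1\bigr)$.

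For the matching upper bound I would exhibit a coupling $S$ saturating whichever of the two bounds is binding: when $2\Delta_0\ge\so(c)-1$, take each connection coupling to be maximal (as in Lemma \ref{lem:Two binary}) and verify that the compatibility inequality $\so(c,d)\le 4$ is automatically satisfied; when $\so(c)-1>2\Delta_0$, lower the $d_i$'s from their marginal maxima just enough to saturate the binding odd-sign inequality in $\so(c,d)\le 4$, while keeping the marginals of each $(V_i^{*},W_i^{*})$ correct. Existence of such $S$ is a linear feasibility question and is precisely the content of the computer-assisted step ``details\ldots analogous to those in the proofs of Theorems 3--6 in Appendix of Ref.~\cite{DK_Bell_LG2}'' that the KCBS and EPRB proofs appeal to. Once $\Delta_{\min}$ is pinned down, the stated expressions for $\cntx(\mathfrak{S},\mathfrak{C})=\Delta_{\min}-\Delta_0$ and the noncontextuality criterion are algebraic rearrangements. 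The main obstacle is the compatibility lemma itself — the Fine-type theorem for the $6$-cycle — since once that is granted, the rest of the theorem is routine LP and algebra.
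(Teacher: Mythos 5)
Your proposal follows essentially the same route as the paper: the paper's own proof consists precisely of invoking the compatibility condition $\so\left(\left\langle V_{i}W_{i\oplus_{3}1}\right\rangle ,\left\langle V_{i}W_{i}\right\rangle :i\right)\leq4$ of Lemma \ref{lem:1-2-1-1} and deferring the remaining linear-programming details to the computer-assisted argument of Ref.~\cite{DK_Bell_LG2}, which is exactly your structure, with your explicit lower-bound derivation for $\Delta_{\min}$ being a correct elaboration of what the paper leaves implicit. The only caution is that your phrase ``automatically satisfied'' for the maximal-coupling case overstates things: that verification is not a consequence of the two bounds you list alone but also requires the Fr\'echet-type constraints tying the observed $\left\langle V_{i}W_{i\oplus_{3}1}\right\rangle$ to the marginals $\left\langle V_{i}\right\rangle ,\left\langle W_{i}\right\rangle$ --- which is precisely the feasibility computation both you and the paper delegate to the cited appendix.
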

\begin{proof}
The computer-assisted proof is based on Lemma \ref{lem:1-2-1-1} below,
and its details, omitted here, can be found in Ref. \cite{DK_Bell_LG2},
Appendix and Theorems 4 and 6.\end{proof}
\begin{lem}
\label{lem:1-2-1-1}The necessary and sufficient condition for the
connections $\left\{ \left(V_{i},W_{i}\right):i\in\left\{ 1,2,3\right\} \right\} $
to be compatible with the observed pairs $\left\{ \left(V_{i},W_{i\oplus_{3}1}\right):i\in\left\{ 1,2,3\right\} \right\} $
is 
\begin{equation}
\so\left(\left\langle V_{i}W_{i\oplus_{3}1}\right\rangle ,\left\langle V_{i}W_{i}\right\rangle :i\in\left\{ 1,2,3\right\} \right)\leq4,
\end{equation}
which can be equivalently written as

\begin{equation}
\begin{array}{l}
\so\left(\left\langle V_{i}W_{i\oplus_{4}1}\right\rangle :i\in\left\{ 1,2,3\right\} \right)+\se\left(\left\langle V_{i}W_{i}\right\rangle :i\in\left\{ 1,2,3\right\} \right)\le4,\\
\\
\se\left(\left\langle V_{i}W_{i\oplus_{4}1}\right\rangle :i\in\left\{ 1,2,3\right\} \right)+\so\left(\left\langle V_{i}W_{i}\right\rangle :i\in\left\{ 1,2,3\right\} \right)\le4.
\end{array}\label{eq:compatibility-1}
\end{equation}

\end{lem}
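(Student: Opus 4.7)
The plan is first to recognize that the six 2-marginal pairs in play, $\{(V_i,W_i),(V_i,W_{i\oplus_3 1}):i\in\{1,2,3\}\}$, link the six binary variables $V_1,W_1,V_2,W_2,V_3,W_3$ in a single closed cycle
\[
V_1 - W_2 - V_2 - W_3 - V_3 - W_1 - V_1,
\]
whose consecutive edges alternate between observed pairs and connection pairs. This recasts the lemma as the compatibility criterion for a cyclic system of rank $n=6$, placing the bound $\so(\cdots)\le 4=n-2$ in the same family as $\so\le 8$ (KCBS, $n=10$) and $\so\le 6$ (EPRB, $n=8$) in Lemmas \ref{lem:1-2} and \ref{lem:1-2-1}.

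For necessity, I would argue pointwise. If a joint distribution on the six $\pm 1$ cycle variables $x_1,\dots,x_6$ exists, then for any signs $\varepsilon_1,\dots,\varepsilon_6\in\{\pm 1\}$ with $\prod_j\varepsilon_j=-1$,
\[
\prod_{j=1}^{6}\bigl(\varepsilon_j\, x_j x_{j+1}\bigr) \;=\;(-1)\,(x_1\cdots x_6)^2\;=\;-1,
\]
so an odd number of the six summands $\varepsilon_j x_j x_{j+1}\in\{\pm 1\}$ must equal $-1$. Hence $\sum_j\varepsilon_j x_j x_{j+1}\le 6-2=4$ pointwise. Taking expectations and maximizing over sign patterns with an odd number of $-$'s yields $\so(\langle V_iW_{i\oplus_3 1}\rangle,\langle V_iW_i\rangle:i\in\{1,2,3\})\le 4$; this is the standard Bell/Fine-style half of the argument.

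For sufficiency, the task is to exhibit, whenever $\so\le 4$, an explicit joint distribution on $\{\pm 1\}^6$ whose six cycle-edge 2-marginals agree with the prescribed ones. This is an LP feasibility problem in the $64$ probabilities on $\{\pm 1\}^6$ subject to the marginal-matching equations for the six prescribed pairs. My approach would be to project the probability simplex onto the six cycle-edge expectation coordinates and enumerate the facets of the image. The anticipated main obstacle, and the reason the authors describe the proof as computer-assisted, is precisely this facet enumeration: one must verify that the odd-sign-pattern inequalities $\so(\cdots)\le 4$ exhaust all nontrivial facets. The cleanest route is to list the $64$ deterministic vertices of the cube together with their six cycle-edge expectation vectors, then invoke LP duality (or Fourier--Motzkin elimination on the 2-marginal probabilities) to confirm that no tighter linear inequality survives. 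The two equivalent reformulations in (\ref{eq:compatibility-1}) then follow by splitting the $\so$ on the full 6-cycle into $\so+\se$ contributions on its two disjoint edge-classes (the observed-pair edges and the connection-pair edges), exactly as carried out in the Appendix of Ref.~\cite{DK_Bell_LG2}.
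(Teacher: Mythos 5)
Your overall plan is sound and is essentially the route the paper itself takes: the paper gives no self-contained proof of this lemma, deferring to a computer-assisted Fourier--Motzkin elimination in the Appendix of Ref.~\cite{DK_Bell_LG2} (and, for the general cyclic case, to Ref.~\cite{KDL2014}). Your identification of the alternating $6$-cycle $V_1-W_2-V_2-W_3-V_3-W_1-V_1$, the bound $4=6-2$ as an instance of $2n-2$, the pointwise parity argument for necessity, and the splitting of $\so$ over the two edge-classes to obtain the two displayed inequalities are all correct.

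There is, however, one concrete misstep in your sufficiency plan. ``Compatibility'' here (see the Remark following Lemma \ref{lem:1-2}) means the existence of a joint distribution on $\{\pm1\}^6$ reproducing the \emph{full} prescribed 2-marginals of all six pairs --- hence also the single-variable expectations $\left\langle V_i\right\rangle,\left\langle W_i\right\rangle$ --- not merely the six products $\left\langle V_iW_{i\oplus_3 1}\right\rangle,\left\langle V_iW_i\right\rangle$. Projecting the simplex onto the six cycle-edge expectation coordinates alone and enumerating facets would only show that every product vector with $\so\le4$ is achievable by \emph{some} joint distribution, with no control over the singles; that is a strictly weaker statement than the lemma. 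You must instead eliminate the $64$ joint probabilities from the full set of marginal-matching equations (your parenthetical Fourier--Motzkin alternative, which is what Ref.~\cite{DK_Bell_LG2} actually does), or equivalently enumerate the facets of the projection onto all twelve coordinates, and then check that every facet other than the odd-cycle inequalities is automatically satisfied whenever the six prescribed 2-marginals are bona fide distributions agreeing on shared variables. With that correction the argument goes through; note also that the $\oplus_4$ appearing in \eqref{eq:compatibility-1} is a typo for $\oplus_3$, which your reformulation implicitly fixes.
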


\subsection{Special cases}

If the SZLG-system is a CC-system, $\Delta_{0}\left(\mathfrak{C}\right)=0$,
the criterion for noncontextuality acquires the form
\begin{equation}
\so\left(\left\langle V_{i}W_{i\oplus_{3}1}\right\rangle :i\in\left\{ 1,2,3\right\} \right)\leq1.
\end{equation}
This can be written in the more familiar (Suppes-Zanotti's) form \cite{SuppesZanotti1981}
as
\begin{equation}
-1\le\left\langle V_{1}W_{2}\right\rangle +\left\langle V_{2}W_{3}\right\rangle +\left\langle V_{3}W_{1}\right\rangle \leq1+2\max\left(\left\langle V_{1}W_{2}\right\rangle ,\left\langle V_{2}W_{3}\right\rangle ,\left\langle V_{3}W_{1}\right\rangle \right).
\end{equation}

\begin{rem}
In the temporal version of SZLG-systems (the Leggett-Garg paradigm
proper), $V_{1}$ and $W_{1}$ are the results of the first measurement
in both $\left(V_{1},W_{2}\right)$ and $\left(V_{3},W_{1}\right)$.
They therefore cannot be influenced by later measurements (no signaling
back in time). Consequently, if there are no contextual measurement
biases, $V_{1}\sim W_{1}$, and 
\begin{equation}
\Delta_{0}\left(\mathfrak{C}\right)=\frac{1}{2}\sum_{i=2}^{3}\left|\left\langle V_{i}\right\rangle -\left\langle W_{i}\right\rangle \right|.
\end{equation}
Including $\left|\left\langle V_{1}\right\rangle -\left\langle W_{1}\right\rangle \right|$,
however, does not hurt, and it allows one to accommodate cases with
non-temporal measurements and biased measurements (when knowing whether
variable 1 will be paired with $2$ or with 3 changes the way one
measures 1).
\end{rem}

\section{Comparing the Systems}

The main theorems regarding our three systems, Theorems \ref{thm:main B},
\ref{thm:main B-1}, and \ref{thm:main B-1-1}, strongly suggest the
following generalization, which we formulate as a conjecture.
\begin{rem}
As of February 2015 we have a proof of this conjecture. A proof of
the supporting Lemma \ref{lem:supporting} is given in Ref. \cite{KDL2014}.\end{rem}
\begin{conjecture}
\label{conj:Gen}Let $\left(\mathfrak{S},\mathfrak{C}\right)$ be
a system with
\begin{equation}
\mathfrak{S}=\left\{ \left(V_{i},W_{\pi\left(i\right)}\right):i\in\left\{ 1,\ldots,n\right\} \right\} ,
\end{equation}
where $\pi\left(\left\{ 1,\ldots,n\right\} \right)$ is a circular
(having a single cycle) permutation of $\left\{ 1,\ldots,n\right\} $,
and
\begin{equation}
\mathfrak{C}=\left\{ \left(V_{i},W_{i}\right):i\in\left\{ 1,\ldots,n\right\} \right\} .
\end{equation}
Then
\begin{equation}
\Delta_{0}\left(\mathfrak{C}\right)=\frac{1}{2}\sum_{i=1}^{n}\left|\left\langle V_{i}\right\rangle -\left\langle W_{i}\right\rangle \right|,
\end{equation}
and 
\begin{equation}
\Delta_{\min}\left(\mathfrak{S},\mathfrak{C}\right)=\frac{1}{2}\max\left(2\Delta_{0}\left(\mathfrak{C}\right),\;\so\left(\left\langle V_{i}W_{\pi\left(i\right)}\right\rangle :i\in\left\{ 1,\ldots,n\right\} \right)-n+2\right).
\end{equation}
Consequently, the degree of contextuality in this system is
\begin{equation}
\cntx\left(\mathfrak{S},\mathfrak{C}\right)=\frac{1}{2}\max\left(0,\so\left(\left\langle V_{i}W_{\pi\left(i\right)}\right\rangle :i\in\left\{ 1,\ldots,n\right\} \right)-n+2-\sum_{i=1}^{n}\left|\left\langle V_{i}\right\rangle -\left\langle W_{i}\right\rangle \right|\right),\label{eq:gen measure}
\end{equation}
and the system has a noncontextual description if and only if\emph{
}\textup{\emph{
\begin{equation}
\so\left(\left\langle V_{i}W_{\pi\left(i\right)}\right\rangle :i\in\left\{ 1,\ldots,n\right\} \right)\leq n-2+\sum_{i=1}^{n}\left|\left\langle V_{i}\right\rangle -\left\langle W_{i}\right\rangle \right|.\label{eq:gen criterion}
\end{equation}
}}
\end{conjecture}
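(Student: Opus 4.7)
The plan is to deduce the conjectured formulas from a supporting compatibility lemma (Lemma \ref{lem:supporting}, promised in \cite{KDL2014}) generalizing Lemmas \ref{lem:1-2}, \ref{lem:1-2-1}, and \ref{lem:1-2-1-1}: for a cyclic system as above, a coupling $S$ realizing prescribed connection correlations $c_i := \langle V_i^{*}W_i^{*}\rangle$ exists (given marginal compatibility) if and only if
\[
\so\left(s_1,\ldots,s_n,\,c_1,\ldots,c_n\right) \leq 2n-2, \qquad s_i := \langle V_i W_{\pi(i)}\rangle.
\]
The identity $\Delta_0(\mathfrak{C}) = \frac{1}{2}\sum_i|\langle V_i\rangle - \langle W_i\rangle|$ is immediate from the definition of $\Delta_0$, since $\langle V_i^{*}\rangle = \langle V_i\rangle$ and $\langle W_i^{*}\rangle = \langle W_i\rangle$ in any coupling. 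The substantive content of the conjecture is therefore the formula for $\Delta_{\min}$.

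I would rewrite the objective in coupling coordinates: $\sum_i \Pr[V_i^{*}\neq W_i^{*}] = \frac{1}{2}(n - \sum_i c_i)$, so minimizing it is equivalent to maximizing $\sum_i c_i$ over admissible $(c_1,\ldots,c_n)$. Two upper bounds on $\sum_i c_i$ follow. First, Lemma \ref{lem:Two binary} gives the per-connection bound $c_i \leq 1 - |\langle V_i\rangle - \langle W_i\rangle| =: c_i^{\max}$, hence $\sum_i c_i \leq n - 2\Delta_0(\mathfrak{C})$ and $\Delta_{\min} \geq \Delta_0$. Second, from the supporting lemma together with the elementary decomposition
\[
\so(s,c) = \max\bigl(\so(s) + \se(c),\; \se(s) + \so(c)\bigr)
\]
and the trivial bound $\se(c) \geq \sum_i c_i$ (the all-positive sign pattern has an even number of negatives), I obtain $\so(s) + \sum_i c_i \leq \so(s,c) \leq 2n-2$, giving $\sum_i c_i \leq 2n - 2 - \so(s)$ and $\Delta_{\min}\geq \frac{1}{2}(\so(s)-n+2)$. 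Combining the two bounds yields the ``$\geq$'' direction of the $\Delta_{\min}$ formula, and the expressions for $\cntx(\mathfrak{S},\mathfrak{C})$ and the noncontextuality criterion \eqref{eq:gen criterion} then follow by elementary algebra.

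For the matching achievability direction I would split into cases. If $2\Delta_0(\mathfrak{C}) \geq \so(s)-n+2$, take $c_i = c_i^{\max}$ for every $i$ and invoke the supporting lemma; the required inequality $\so(s,c^{\max})\leq 2n-2$ would be verified from the case hypothesis via the closed-form expressions for $\se$ and $\so$ given in the preceding conventions subsection. If instead $\so(s)-n+2 > 2\Delta_0(\mathfrak{C})$, I would lower some of the $c_i$ below $c_i^{\max}$ so that $\sum_i c_i = 2n-2-\so(s)$, arranging the reduction so that the supporting-lemma inequality remains satisfied (feasible by continuity, since at $c_i\equiv -1$ the inequality is strict). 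The main obstacle I expect is the first case when some $c_i^{\max}$ are negative, i.e., $|\langle V_i\rangle - \langle W_i\rangle|>1$ for some $i$: in that regime $\se(c^{\max})$ can strictly exceed $\sum_i c_i^{\max}$, and the $\se(s)+\so(c^{\max})$ branch of the decomposition is not directly controlled by the case hypothesis. Handling this uniformly in $n$ should exploit the cyclic structure of $\pi$---which keeps the supporting inequality a single scalar constraint rather than one inequality per cycle---through a closed sign-product argument replacing the computer-assisted verification used for $n=3,4,5$.
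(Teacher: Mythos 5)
First, a framing point: the paper does not prove this statement---it is presented as a conjecture, supported only by Lemma \ref{lem:supporting} (whose proof is deferred to Ref.~\cite{KDL2014}) and by the ``master'' criterion \eqref{eq:master}, with a remark explicitly identifying the implication \eqref{eq:gen criterion} $\Rightarrow$ \eqref{eq:master} as the open content. So there is no in-paper argument to compare against; what can be judged is whether your proposal would close the conjecture. Your skeleton is the intended one: reduce to Lemma \ref{lem:supporting}, work with $c_{i}=\left\langle V_{i}^{*}W_{i}^{*}\right\rangle $ via $\sum_{i}\Pr\left[V_{i}^{*}\neq W_{i}^{*}\right]=\frac{1}{2}\left(n-\sum_{i}c_{i}\right)$, and use the decomposition $\so(s,c)=\max\left(\so(s)+\se(c),\,\se(s)+\so(c)\right)$. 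Your lower-bound direction, $\Delta_{\min}\geq\frac{1}{2}\max\left(2\Delta_{0},\so(s)-n+2\right)$, is correct and complete.

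The achievability direction, however, is where the whole theorem lives, and your proposal does not deliver it. Your Case 1 requires showing that $\so(s)\leq n-2+2\Delta_{0}$ implies $\so\left(s,c^{\max}\right)\leq2n-2$; this is \emph{exactly} the implication \eqref{eq:gen criterion} $\Rightarrow$ \eqref{eq:master} that constitutes the conjecture, and it cannot be ``verified from the case hypothesis via the closed-form expressions for $\se$ and $\so$,'' because as an inequality among unconstrained reals it is false. Take $n=3$, $\left|\left\langle V_{1}\right\rangle -\left\langle W_{1}\right\rangle \right|=3/2$, the other two differences zero, and $s=(1,1,1)$: then $\so(s)=1\leq5/2$ but $c^{\max}=(-1/2,1,1)$ gives $\so\left(s,c^{\max}\right)=11/2>4$. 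What rules this out is not sign bookkeeping but the distributional constraints $\left|\left\langle V_{i}\right\rangle +\left\langle W_{\pi(i)}\right\rangle \right|-1\leq s_{i}\leq1-\left|\left\langle V_{i}\right\rangle -\left\langle W_{\pi(i)}\right\rangle \right|$ chained around the cycle (here they force $s_{3}\leq-1/2$); your sketch never invokes them, and no argument that omits them can succeed. Your Case 2 is also anchored on a false claim: at $c\equiv-1$ the branch $\se(s)+\so(-1,\ldots,-1)$ can exceed $2n-2$ even for realizable data (e.g., $n=3$, all marginals zero, $s=(1,1,1/2)$ yields $5/2+3=11/2>4$), so the continuity argument has no feasible endpoint to interpolate toward. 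In short, the reduction and the ``$\geq$'' half are right and match the route the paper gestures at, but the step you defer to routine verification is the theorem itself.
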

The corresponding generalization of the supporting lemmas is
\begin{lem}
\label{lem:supporting}The necessary and sufficient condition for
the connections $\left\{ \left(V_{i},W_{i}\right):i\in\left\{ 1,\ldots,n\right\} \right\} $
to be compatible with the observed pairs $\left\{ \left\langle V_{i},W_{\pi\left(i\right)}\right\rangle :i\in\left\{ 1,\ldots,n\right\} \right\} $
is 
\begin{equation}
\so\left(\left\langle V_{i}W_{\pi\left(i\right)}\right\rangle ,\left\langle V_{i}W_{i}\right\rangle :i\in\left\{ 1,\ldots,n\right\} \right)\leq2n-2,
\end{equation}
which can be equivalently written as
\begin{equation}
\begin{array}{l}
\so\left(\left\langle V_{i}W_{\pi\left(i\right)}\right\rangle :i\in\left\{ 1,\ldots,n\right\} \right)+\se\left(\left\langle V_{i}W_{i}\right\rangle :i\in\left\{ 1,\ldots,n\right\} \right)\le2n-2,\\
\\
\se\left(\left\langle V_{i}W_{\pi\left(i\right)}\right\rangle :i\in\left\{ 1,\ldots,n\right\} \right)+\so\left(\left\langle V_{i}W_{i}\right\rangle :i\in\left\{ 1,\ldots,n\right\} \right)\le2n-2.
\end{array}\label{eq:compatibility-1-1}
\end{equation}

\end{lem}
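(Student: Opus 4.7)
The plan is to reduce the compatibility question to a classical cyclic correlation-polytope problem and then address necessity and sufficiency separately. On the vertex set $\{V_1,\ldots,V_n,W_1,\ldots,W_n\}$ consider the graph whose $2n$ edges are the $n$ context pairs $\{V_i,W_{\pi(i)}\}$ and the $n$ hypothesised connection pairs $\{V_i,W_i\}$. Every vertex has degree $2$, with one edge of each type, and the walk $V_1 \to W_{\pi(1)} \to V_{\pi(1)} \to W_{\pi^2(1)} \to V_{\pi^2(1)} \to \cdots$ visits all $2n$ vertices exactly once before returning to $V_1$, using that $\pi$ is a single cycle of length $n$. The graph is therefore a single cycle $C_{2n}$ with edges alternating in type, and the lemma reduces to the assertion: for $N=2n$ binary $\pm 1$ variables $X_1,\ldots,X_N$ arranged on $C_N$ with prescribed 2-marginal distributions on consecutive pairs, a joint distribution on $\{\pm 1\}^N$ matching all of them exists iff $\so(\langle X_iX_{i+1}\rangle:i=1,\ldots,N)\le N-2$, where indices are taken mod $N$.

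Necessity is then elementary. In any $\pm 1$ realisation, $\prod_{i=1}^N X_iX_{i+1}=(\prod_i X_i)^2=1$, so the number of consecutive products equal to $-1$ is always even. For any sign vector $(\epsilon_1,\ldots,\epsilon_N)$ with an odd number of $-$'s we therefore have $\prod_i \epsilon_iX_iX_{i+1}=-1$, which forces at least one term $\epsilon_iX_iX_{i+1}$ to equal $-1$, and hence $\sum_i \epsilon_iX_iX_{i+1}\le N-2$. Taking expectations over all odd-sign patterns and relabelling back to the $V$'s and $W$'s yields the stated bound $\so(\langle V_iW_{\pi(i)}\rangle,\langle V_iW_i\rangle:i)\le 2n-2$. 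The equivalence with the two split forms is a direct consequence of the elementary identity $\so(a_1,\ldots,a_n,b_1,\ldots,b_n)=\max(\so(a_1,\ldots,a_n)+\se(b_1,\ldots,b_n),\,\se(a_1,\ldots,a_n)+\so(b_1,\ldots,b_n))$, since an odd total number of $-$'s forces exactly one of the two halves to contribute an odd count.

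Sufficiency is the main obstacle. One must construct, from a consistent collection of 2-marginals satisfying every odd-sign cycle inequality, an actual joint distribution on $\{\pm 1\}^{2n}$. I would attack this by Fourier-Motzkin elimination of the $2^{2n}$ joint-probability unknowns subject to the $2N$ linear equalities enforcing the consecutive 2-marginals together with nonnegativity, and then argue that the only non-redundant inequalities surviving the projection are the $2^{N-1}$ odd-sign cycle bounds, together with the trivial pair-feasibility constraints already built into the input 2-marginals. An equivalent but more constructive route, generalising Fine's decomposition for the $N=4$ CHSH case, is to exhibit the joint law as an explicit convex mixture of deterministic cycle assignments, with weights affine in the correlations $\langle X_iX_{i+1}\rangle$ and the one-marginals $\langle X_i\rangle$, and to verify by induction on $N$ that these weights are nonnegative precisely when the cycle inequalities hold. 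Running this induction uniformly in $n$ is the substantive combinatorial step deferred to the companion paper \cite{KDL2014}; for the three special cases $n=3,4,5$ entering Theorems \ref{thm:main B}, \ref{thm:main B-1}, and \ref{thm:main B-1-1}, it has already been verified, with computer assistance in the KCBS and EPRB cases.
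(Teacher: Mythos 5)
Your reduction of the lemma to a single cycle is correct and is the right way to see the statement: since $\pi$ has a single cycle, the $n$ context edges $\{V_i,W_{\pi(i)}\}$ and the $n$ connection edges $\{V_i,W_i\}$ do form one alternating cycle $C_{2n}$ on the $2n$ variables, so the question becomes whether prescribed consistent $2$-marginals on consecutive pairs of a $2n$-cycle of $\pm1$ variables admit a joint distribution. Your necessity argument (the parity identity $\prod_i X_iX_{i+1}=1$ forcing $\sum_i\epsilon_iX_iX_{i+1}\le 2n-2$ for every odd sign pattern, then taking expectations) is complete and correct, as is the splitting identity $\so(a,b)=\max\bigl(\so(a)+\se(b),\,\se(a)+\so(b)\bigr)$ that yields the two displayed forms.

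The gap is the sufficiency half, and it is where essentially all of the content of the lemma lives. You name two strategies --- Fourier--Motzkin elimination of the $2^{2n}$ joint probabilities, or an explicit Fine-type convex decomposition with an induction on the cycle length --- but you execute neither. The substantive claim you would have to establish is that, after projecting out the joint, the \emph{only} nontrivial facets are the $2^{2n-1}$ odd-sign cycle inequalities; in particular, that prescribing all the one-marginals $\langle X_i\rangle$ (not just the products $\langle X_iX_{i+1}\rangle$) introduces no further constraints. That is not obvious and is precisely the theorem proved in Ref.~\cite{KDL2014}. To be fair to you, the paper itself gives no in-text proof of this lemma either --- it states only that a proof is given in Ref.~\cite{KDL2014} --- so you have reconstructed the elementary half of the argument and correctly isolated, but not closed, the half that the authors also outsource.
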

It is easy to see that the criterion and measure for the SZLG-system
(Theorem \ref{thm:main B-1}) is a special case of those for the EPRB-system
(Theorem \ref{thm:main B-1-1}) which in turn is a special case of
those for the KCBS system. Specifically, by putting $\left\langle V_{5}W_{1}\right\rangle =1$
in the KCBS-system, and assuming in addition that $W_{5}\sim V_{5}$,
so that $\left\langle V_{5}W_{5}\right\rangle =1$ in the maximal
coupling, we can replace $W_{5}$ in $\left\langle V_{4}W_{5}\right\rangle $
with $W_{1}$ and obtain the EPRB-system. By putting then $\left\langle V_{4}W_{1}\right\rangle =1$
in the EPRB-system, and $W_{4}\sim V_{4}$, so that $\left\langle V_{4}W_{4}\right\rangle =1$
in the maximal coupling, can replace $W_{4}$ in $\left\langle V_{3}W_{4}\right\rangle $
with $W_{1}$ and obtain the SZLG-system.

It is easy to see how this pattern generalizes. First of all, any
circular permutation $\pi$ can be replaced, by appropriate renaming,
with circular addition of 1 on $\left\{ 1,\ldots,n\right\} $, making
the observed pairs $\left(V_{1},W_{2}\right),\ldots,\left(V_{n-1},W_{n}\right),\left(V_{n},W_{1}\right)$.
Let the SZLG, EPRB, and KCBS systems be designated as systems of order
3, 4, and 5, respectively, and the system in Conjecture \ref{conj:Gen}
as an $\left(n\right)$-system. By putting $\left\langle V_{n}W_{1}\right\rangle =1$
in the $(n)$-system, and assuming that $W_{n}\sim V_{n}$ , so that
$\left\langle V_{n}W_{n}\right\rangle =1$ in the maximal coupling,
we replace $W_{n}$ in $\left\langle V_{n-1}W_{n}\right\rangle $
with $W_{1}$ and obtain an $\left(n-1\right)$-system.
\begin{rem}
In Ref. \cite{KDL2014} we have proved the following theorem: the
system $\left(\mathfrak{S},\mathfrak{C}\right)$ in Conjecture \eqref{conj:Gen}
has a noncontextual description if and only if
\begin{equation}
\so\left(\left\langle V_{i}W_{\pi\left(i\right)}\right\rangle ,1-\left|\left\langle V_{i}\right\rangle -\left\langle W_{i}\right\rangle \right|:i=1,\ldots,n\right)\le2n-2.\label{eq:master}
\end{equation}
It is easy to show that the conjectured criterion \eqref{eq:gen criterion}
follows from \eqref{eq:master}, i.e., its violation is a sufficient
condition for contextuality. The conjecture is that it is also true
that \eqref{eq:master} follows from \eqref{eq:gen criterion}. See
Remark 40.
\end{rem}

\section{Conclusion}

We have presented a theory of (non)contextuality in purely probabilistic
terms abstracted away from physical meaning. The computational aspects
of the theory are confined to finite systems with binary components,
but it is easily generalizable to deal with components attaining arbitrary
finite numbers of values. As the components of the bunches get more
complex and/or connections get longer than pairs, the generalizations
become less unique.

The basis for the theory is the principle of Contextuality-by-Default,
which has philosophical and mathematical consequences. Mathematically,
it leads to revamping (while remaining within its confines) of the
Kolmogorovian probability theory, with more prominent than usual emphasis
on stochastic unrelatedness. A reformulated theory may even avoid
the notion of a sample space altogether \cite{8NHMT}. 

Philosophically, the principle of Contextuality-by-Default may elucidate
the difference between ``ontic'' and ``epistemic'' aspects of
contextuality (perhaps even of probability theory generally).

The theory also offers pragmatic advantages: it allows for non-CC-systems
(whether due to signaling or due to context-dependent measurement
biases) and for experimental/computational errors in the analysis
(including statistical analysis) of experimental data. 

The theory has predecessors in the literature. The idea of labeling
differently random variables in different contexts and considering
the probability with which they can be equal to each other if coupled
has been prominently used in Refs. \cite{2Larsson,Svozil}.

\subsubsection*{Acknowledgments}

This work is supported by NSF grant SES-1155956 and AFOSR grant FA9550-14-1-0318.
We have benefited from collaboration with J. Acacio de Barros, and
Gary Oas, as well as from discussions with Samson Abramsky, Guido
Bacciagaluppi, and Andrei Khrennikov. An abridged version of this
paper was presented at the Purdue Winer Memorial Lectures in November
2014.

\end{document}